\newcommand{\tKN}{\mathcal{K}_{N}}         
\DeclareMathOperator{\Pre}{Pre}
\newcommand{\beq}{\begin{equation}}
\newcommand{\eeq}{\end{equation}}
\begin{document}

\title{Minimum Time  Learning Model Predictive Control }

\author[1]{Ugo Rosolia*}

\author[2]{Francesco Borrelli}

\authormark{Ugo Rosolia and Francesco Borrelli}

\address[1]{\orgdiv{Department of Mechanical and Civil Engineering}, \orgname{California Institute of Technology}, \orgaddress{\state{California}, \country{USA}}}

\address[2]{\orgdiv{Department of Mechanical Engineering}, \orgname{University of California Berkeley}, \orgaddress{\state{California}, \country{USA}}}

\corres{*Ugo Rosolia, B100 Gates-Thomas Laboratory
MC 104-44 Pasadena, CA 91125. \email{urosolia@caltech.edu}}


\abstract[Summary]{In this paper we present a Learning Model Predictive Control (LMPC) strategy for linear and nonlinear time optimal control problems. Our work builds on existing LMPC methodologies and it guarantees finite time convergence properties for the closed-loop system. We show how to construct a time varying safe set and terminal cost function using closed-loop data. The resulting LMPC policy is time varying and it guarantees recursive constraint satisfaction and non-decreasing performance. 
Computational efficiency is obtained by  convexifing the time varying safe set and time varying terminal cost function. We demonstrate that, for a class of nonlinear system and convex constraints, the convex LMPC formulation guarantees recursive constraint satisfaction and non-decreasing performance. 
Finally, we illustrate the effectiveness of the proposed strategies on minimum time obstacle avoidance and racing examples.}

\keywords{predictive control, learning model predictive control, minimum time, iterative improvement}


\maketitle

\section{Introduction}
In time optimal control problems, the goal of the controller is to steer the system from the starting point $x_S$ to the terminal point $x_F$ in minimum time, while satisfying state and input constraints.
These problems have been studied since the 1950s~\cite{bogner1954investigation, bellman1956bang, gamkrelidze1961theory, lasalle1959time} and it was shown that the optimal input strategy is a piece-wise function which saturates the input constraints~\cite{bogner1954investigation, bellman1956bang, gamkrelidze1961theory}. Furthermore, while investigating the solution to time optimal control problems, researches formalized the maximum principle which 
describes the first order necessary optimality conditions~\cite{boltyanskiy1961theory, liberzon2011calculus}.

For linear systems, time optimal control problems can be solved applying the maximum principle. However, for nonlinear systems the optimality conditions are hard to solve, as those are described by a two-point boundary value problem for a system of nonlinear differential equations~\cite{liberzon2011calculus}. For this reason, several approaches have been proposed to approximate the solution to time optimal control problems. These strategies can be divided in three different categories: $i)$ \textit{hierarchical approaches}, where in the first step a collision-free path is generated and afterwards it is computed the speed profile which minimizes the travel time along the path~\cite{bobrow1985time, kapania2016sequential, shiller1992computation,nagy2019sequential, rajan1985minimum, verscheure2009time},  $ii)$ \textit{maximum principle-based strategies}, which exploit the necessary optimality conditions~\cite{meier1990efficient,manor2018time,scott2017time} and $iii)$ \textit{iterative optimization strategies}, where 
the original control problem is approximated solving sequentially or in parallel simpler optimization problems~\cite{graham2015minimum, bobrow1988optimal, verschueren2016time, al2016hierarchical}. A comprehensive literature review is out of the scope of this work. 
In the following, we focus on iterative optimization strategies, because the proposed approach falls into this category. In \cite{graham2015minimum} the time optimal control problem is posed as a constrained nonlinear optimization problem and it is solved using a variable-order Legendre-Gauss-Radau
method, where the initial guesses for the algorithm are obtained by solving a sequence of modified optimal control problems. 
A different approach was proposed in~\cite{bobrow1988optimal} where the path is parametrized using basis function which are amenable for optimization.
The authors in \cite{verschueren2016time} first proposed a smooth spatial system reformulation for the autonomous racing time optimal control problem. 
Afterwards, they used a nonlinear optimization solver based on a SQP algorithm to compute an optimal solution.
In \cite{al2016hierarchical} the authors used at each time step a Model Predictive Controller (MPC) to compute a trajectory which drives the system from the current state to the end state.

We propose to iteratively approximate the optimal solution to time optimal problems.
In particular, we formulate the time optimal problem as an iterative control task, where at each iteration $j$ the goal of the controller is to steer the system from the starting point $x_S$ to the terminal point $x_F$ in minimum time. Several strategies have been proposed to iteratively synthesize MPC policies for iterative tasks~\cite{c34,c33,c35,c4,liu2013nonlinear}. However, these approaches assume that the goal of the controller is to track a given reference trajectory, which is not available in minimum time optimal control problems. The work in~\cite{tamar2017learning} presented a reference-free ILC strategy, where the stage cost of an MPC is learned after each iteration. The authors demonstrated the effectiveness of the control strategy on several navigation and manipulation examples, however the control strategy does not consider a terminal constraint set and terminal cost function which are required to guarantee safety and convergence to the goal set in minimum time problems.
Therefore, we build on the reference-free Learning Model Predictive Control (LMPC) strategy~\cite{rosolia2017learning}, where the terminal constraint set and terminal cost function are estimated from data.
In particular after each iteration $j$, the closed-loop data are stored and used to estimate $i)$ a \textit{safe set} of states from which the control task can be completed using a known policy $\pi^j$ and $ii)$ a \textit{value function} which approximates the closed-loop cost associated with the policy~$\pi^j$. These safe set and value function are used as a terminal constraint set and terminal cost function to synthesize the LMPC policy at the next iteration $j+1$. 

The first contribution of this work is to design a LMPC scheme for nonlinear systems where the safe set and the approximated value function are time varying. 
At each time $t$ of iteration $j$, the proposed time varying LMPC uses a subset of the stored data to compute the control action and it allows us to reduce the computational burden associated with time invariant LMPC methodologies~\cite{rosolia2017learning}, as demonstrated in the result section.
We show that the proposed strategy guarantees safety, finite time convergence and non-decreasing performance with respect to previous executions of the control task. We assume that the model is known, and therefore the learning process of the time varying safe set and approximated value function can be performed in simulations. Furthermore, we show that, when the system dynamics are subject to bounded additive disturbances, the proposed approach can be combined with standard MPC strategies to design robust LMPC policies.
When a model is not available, system identification strategies may be used to learn an uncertain model~\cite{aswani2013provably,berkenkamp2016safe,koller2018learning,hewing2018cautious,kocijan2004gaussian,rosolia2019learning, terzi2018learning,hewing2019learning, ostafew2014learning, berkenkamp2015safe}.
In this case, the safe set and value function are constructed using experimental data and their properties are probabilistic, as discussed in~\cite{rosolia2019sample}. The second contribution of this work is to propose a relaxed LMPC formulation which is based on a convexified time varying safe set and cost function. This strategy enables the reduction of the computational burden while guaranteeing safety and non-decreasing performance for a specific class of nonlinear system and convex constraints. Furthermore, we show that the same properties hold for nonlinear systems, if a sufficient condition on the stored states and the system dynamics is satisfied. Compared to the relaxed formulations from~\cite{rosolia2017learninglinear} and \cite{CDCRepetitiveRacing}, the proposed time varying strategy is applicable to a broader class of dynamical systems and it is tailored to minimum time problems.
Finally, we test the proposed strategies on nonlinear time optimal control problems. We show that the proposed LMPC is able to match the performance of the strategy from~\cite{rosolia2017learning}, while being computationally faster. 


\section{Problem Formulation}\label{sec:probFormulation}
Consider the nonlinear system
\begin{equation}\label{eq:system}
x_{t+1}^j=f(x_t^j,u_t^j),
\end{equation}
where at time $t$ of iteration $j$ the state $x_t^j \in \mathbb{R}^n$ and the input $u_t^j \in \mathbb{R}^d$. Furthermore, the system is subject to the following state and input constraints
\begin{equation}\label{eq:stateInputConstr}
\begin{aligned}
x_t^j \in \mathcal{X} \text{ and } u_t^j\in \mathcal{U}, \forall t \geq 0, \forall j \geq 0.
\end{aligned}
\end{equation}

The goal of the controller is to solve the following minimum time optimal control problem
\begin{equation}\label{eq:minTimeOPC}
    \begin{aligned}
        \min_{{T^j}, u_0^j, \ldots, u_{{T^j}-1}^j } & \quad \sum_{t = 0}^{{T^j}-1} 1 \\
        \text{s.t. } \quad~ & \quad x_{t+1}^j = f(x_t^j, u_t^j), \forall t = [0,\dots, {T^j}-1] \\
        & \quad x_t^j \in \mathcal{X},\ u_t^j\in \mathcal{U}, \forall t = [0,\dots, {T^j}-1] \\
        & \quad x^j_{{T^j}} = x_F,\\
        & \quad x_0^j = x_S
    \end{aligned}
\end{equation}
where the goal state $x_F$ is an unforced equilibrium point for system~\eqref{eq:system}, i.e., $f(x_F, 0) = x_F$.

In this paper we propose to solve Problem~\eqref{eq:minTimeOPC} iteratively. In particular, at each iteration we drive the system from the starting point $x_S$ to the terminal state $x_F$ and we store the closed-loop trajectories. After completion of the $j$th iteration, these trajectories are used to synthesize a control policy 
for the next iteration $j+1$. 
We show that the proposed iterative design strategy guarantees recursive constraint satisfaction and non-decreasing closed-loop performance.
Next, we define the safe set and value function approximation which will be used in the controller design.

\begin{remark}
In the following, we focus on iterative tasks where the initial condition is the same at each iteration, i.e., $x_0^j=x_S,~\forall j\geq 0$. Afterwards, in Section~\ref{sec:perturbed} we discuss the properties of the proposed control strategy when the initial condition is perturbed at each iteration. Finally, in Section~\ref{sec:dubinsRacing} we test the controller for different initial conditions.
\end{remark}

\section{Safe Set and Value Function Approximation}\label{sec:SSandVfun}

At each $j$th iteration of the control task, we store the closed-loop trajectories and the associated input sequences. In particular, at the $j$th iteration we define the vectors
\begin{equation}\label{eq:storedTrajectories}
\begin{aligned}
 {\bf{u}}^j ~ &= ~ [u_0^j,\ldots,~u_{ T^j}^j], \\
 {\bf{x}}^j ~ &= ~ [x_0^j,\ldots,~x_{T^j}^j], 
\end{aligned}
\end{equation}
where $x_t^j$ and $u_t^j$ are the state and input of system~\eqref{eq:system}. In \eqref{eq:storedTrajectories}, $T^j$ denotes the time at which the closed-loop system reached the terminal state, i.e., $x_{T^j} = x_F$.

\subsection{Time Varying Safe Set}
We use the stored data to build time varying safe sets, which will be used in the controller design to guarantee recursive constraint satisfaction. 
First, we define the time varying safe set at iteration $j$ as
\begin{equation}\label{eq:SS}
    \mathcal{SS}^{j}_t = \bigcup_{i = 0}^j \bigcup_{k = \delta^{j,i}_t}^{T^i} x_{k}^i,
\end{equation}
where, for $T^{j,*} = \min_{k \in \{ 0, \ldots, j \}}  T^k $,
\begin{equation}\label{eq:deltaTime}
    {\delta^{j,i}_t = \min(t + T^i - T^{j,*},T^i).}
\end{equation} 
Definition~\eqref{eq:deltaTime} implies that if at time $t$ of the $j$th iteration $x_t^j = x_{\delta_t^{j,i}}^i \neq x_{T^i}^i$, then system~\eqref{eq:system} can be steered along the $i$th trajectory to reach $x_{T^i}^i = x_F$ in $(T^{j,*}-t)$ time steps.
Therefore, at each time $t$ the time varying safe set $\mathcal{SS}^{j}_t$ collects the stored states from which system~\eqref{eq:system} can reach the terminal state $x_F$ in at most $(T^{j,*}-t)$ time steps.
A representation of the time varying safe set for a two-dimensional system is shown in Figure~\ref{fig:SSillustration}. 
We notice that, by definition, if a state $x_t^i$ belongs to $\mathcal{SS}^{j}_t$, then there exists a feasible control action $u_t^i \in \mathcal{U}$ which keeps the evolution of the nonlinear system~\eqref{eq:system} within the time varying safe set at the next time step $t+1$, i.e., $f(x_t^i, u_t^i) \in \mathcal{SS}^{j}_{t+1}$. This property will be used in the controller design to guarantee that state and input constraints~\eqref{eq:stateInputConstr} are recursively satisfied.

\begin{figure}[h!]
	\centering \includegraphics[width=0.5\columnwidth]{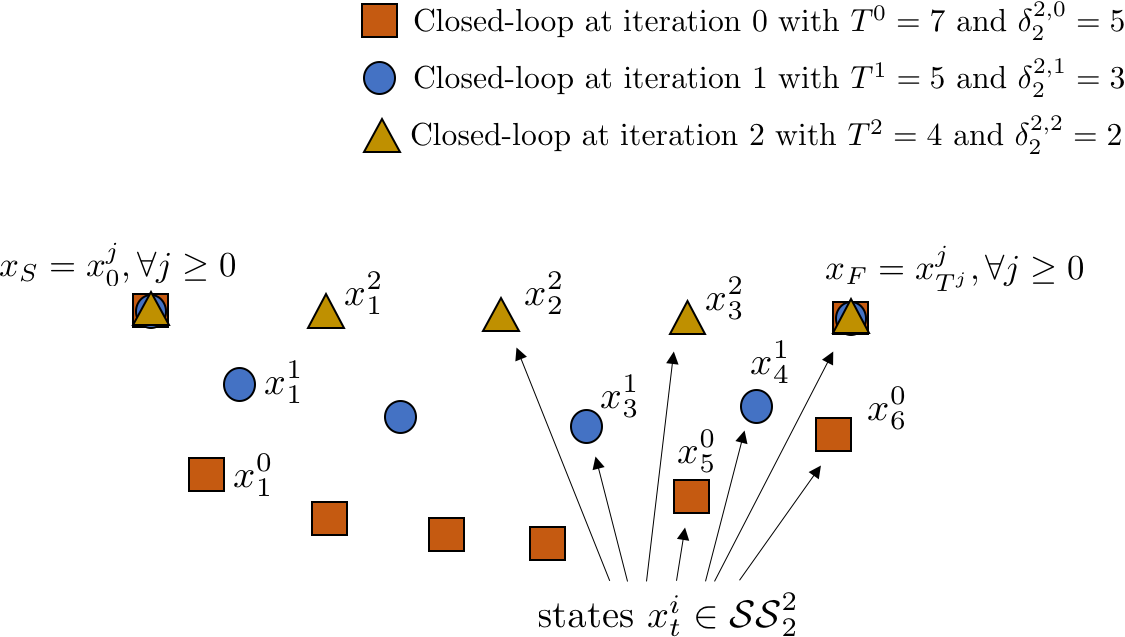}
	\caption{Representation of the time varying safe set $\mathcal{SS}^{2}_2$. We notice that just a subset of the stored states are used to define $\mathcal{SS}^{2}_2$. Furthermore, we notice that from all states $x_t^i \in \mathcal{SS}_2^2$ system~\eqref{eq:system} can be steered to $x_F$ in at most $T^{j,*}-t=2$ time steps.}\label{fig:SSillustration}
\end{figure}

Finally, at each time $t$ we define the local convex safe set as the convex hull of $\mathcal{SS}^{j}_t$ from~\eqref{eq:SS},
\begin{equation}\label{eq:CS}
\begin{aligned}
\mathcal{CS}^{j}_t &= \text{Conv}\big( \mathcal{SS}^{j}_t \big) \\
&= \big\{x\in \mathbb{R}^n : \exists [ \lambda^0_{\delta_t^{j,0}},..., \lambda_{T^j}^j ] \geq 0,  \sum_{i = 0}^j \sum_{k = \delta^{j,i}_t}^{T^i} \lambda_{k}^i = 1, \sum_{i = 0}^j \sum_{k = \delta^{j,i}_t}^{T^i} \lambda_{k}^i x_k^i = x\big\}.
\end{aligned}
\end{equation}
Later on we will show that for a class of nonlinear systems,  if a state $x_t^i$ belongs to $\mathcal{CS}^{j}_t$, then there exists a feasible control action $u_t^i \in \mathcal{U}$ which keeps the evolution of the nonlinear system~\eqref{eq:system} within the convex safe set at the next time step $t+1$. For such class of nonlinear systems, $\mathcal{CS}^{j}_t$ can be used to synthesize controllers which guarantee state and input constraint satisfaction at all time instants.

\begin{remark}\label{remark:terminalSet}
When the goal of the controller is to reach an invariant set $\mathcal{X}_F$ in minimum time, it is still possible to use the proposed iterative control strategy. In this case one should replace $x_{T^i}^i = x_F$ with $\mathcal{X}_F$ in definition~\eqref{eq:SS}. 
\end{remark}

\subsection{Time Varying Value Function Approximation}
In this section, we show how to construct $Q$-functions which approximate the cost-to-go over the safe set and convex safe set. These functions will be used in the controller design to guarantee non-decreasing performance at each iteration.

We define the cost-to-go associated with the stored state $x_t^j$ from~\eqref{eq:storedTrajectories},
\begin{equation}\label{eq:costToGo}
\begin{aligned}
J_{t\rightarrow T^j}^j(x_t^j) = ~ \sum\limits_{k=t}^{T^j} \mathds{1}_{x_F}(x_k^j),
\end{aligned}
\end{equation}
where the indicator function 
\begin{equation*}
    \mathds{1}_{x_F}(x) = \begin{cases}1 & \mbox{If } x_F \neq x \\
    0 & \mbox{Else}\end{cases}.
\end{equation*}
The above cost-to-go represents the time steps needed to steer system~\eqref{eq:system} from $x_t^j$ to the terminal state $x_F$ along the $j$th trajectory, and it is used to construct the function $Q^{j}_t(\cdot)$, defined over the safe set $\mathcal{SS}^{j}_t$,
\begin{equation}\label{eq:Qfunction}
\begin{aligned}
Q^{j}_t(x) = \min\limits_{\substack{i \in \{0, \ldots, j\} \\ k \in \{\delta_t^{j,i}, \ldots, T^i\} } } & \quad J^i_{k\rightarrow T^i}(x_k^i)\\
\text{s.t. } \quad~ & \quad x = x_k^i \in \mathcal{SS}^{j}_t.
\end{aligned}
\end{equation}
The function $Q^{j}_t(\cdot)$ assigns to every point in the safe set $\mathcal{SS}^{j}_t$ from~\eqref{eq:SS} the minimum cost-to-go along the stored trajectories from~\eqref{eq:storedTrajectories}, i.e.,
\begin{equation*}\label{eq:SStraj}
\begin{aligned}
\forall x \in \mathcal{SS}^{j}_t, Q^{j}_t(x) &= J^{(i^*)}_{k^*\rightarrow T^{(i^*)}}(x) = \sum_{k=k^*}^{T^{(i^*)}} \mathds{1}_{x_F}\big(x_k^{(i^*)}\big),
\end{aligned}
\end{equation*}
where ${i^*}$ and ${k^*}$ are the minimizers in~\eqref{eq:Qfunction}:
\begin{equation}\label{eq:argMin}
\begin{aligned}
[{i^*}, {k^*}] =\mathop{\mathrm{argmin}}\limits_{ \substack{i \in \{0, \ldots, j\} \\ k \in \{\delta_t^{j,i}, \ldots, T^i\} } } & \quad J^i_{k\rightarrow T^i}(x_k^i)\\
\text{s.t.} \quad ~~ & \quad x = x_k^i \in \mathcal{SS}^{j}_t.
\end{aligned}
\end{equation}

Finally, we define the convex $Q$-function over the convex safe set $\mathcal{CS}^{j}_t$ from~\eqref{eq:CS},
\begin{equation}\label{eq:cvxQfunction}
\begin{aligned}
\bar{Q}^{j}_t(x) = \min\limits_{ [\lambda_{\delta_t^{j,0}}^0, \ldots, \lambda_{T^j}^j] \geq 0 } & \quad \sum_{i = 0}^j \sum_{k = \delta_t^{j,i}}^{T^i} \lambda_{k}^i J^i_{k\rightarrow T^i}(x_{k}^{i})\\
\text{s.t. } \quad ~~ & \quad \sum_{i = 0}^j \sum_{k = \delta_t^{j,i}}^{T^i} \lambda_{k}^i x_k^i = x \\
& \quad \sum_{i = 0}^j \sum_{k = \delta_t^{j,i}}^{T^i} \lambda_{k}^i = 1,
\end{aligned}
\end{equation}
where $\delta^{j,i}_t$ is defined in~\eqref{eq:deltaTime}. 
The convex $Q$-function $\bar{Q}^{j}_t(\cdot)$ is simply a piecewise-affine interpolation of the $Q$-function from~\eqref{eq:Qfunction} over the convex safe set, as shown in Figure~\ref{fig:localQfunillustration}. 
In Section~\ref{sec:properties}, we will show that the convex $Q$-function can be used to guarantee non-decreasing performance for a particular class of nonlinear systems.

\begin{figure}[h!]
	\centering \includegraphics[width=0.5\columnwidth]{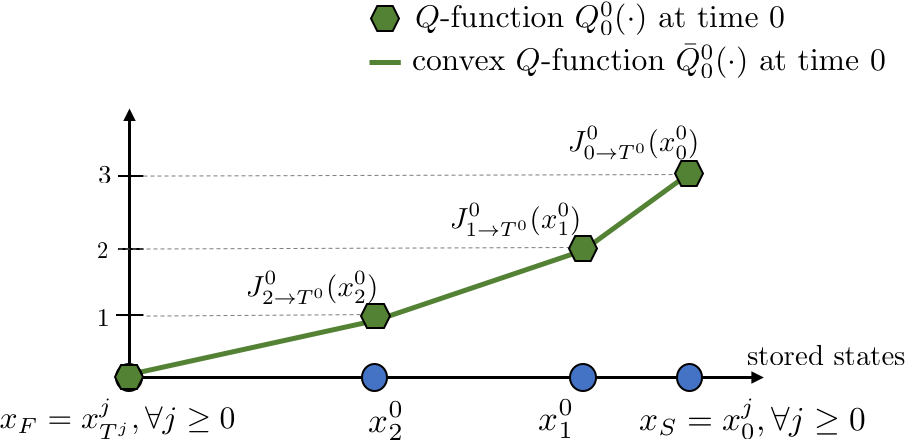}
	\caption{Representation of the $Q$-function $Q^{0}_0(\cdot)$ and convex $Q$-function $\bar Q^{0}_0(\cdot)$. We notice that the $Q$-function $Q^{0}_0(\cdot)$ is defined over a set of discrete data points, whereas the convex $Q$-function $\bar Q^{0}_0(\cdot)$ is defined over the convex safe set.}\label{fig:localQfunillustration}
\end{figure}


\section{Learning Model Predictive Control Design}
In this section, we describe the controller design. 
We propose a Learning Model Predictive Control (LMPC) strategy for nonlinear systems which guarantees recursive constraint satisfaction and non-decreasing performance at each iteration. 
Computing the control action from the LMPC policy is expensive. 
For this reason, we also present a relaxed LMPC policy, which allows us to reduce the computational cost and it guarantees recursive constraint satisfaction and non-decreasing performance for a class of nonlinear systems.

\subsection{LMPC: Mixed Integer Formulation}\label{sec:LMPC_MIP}

At each time $t$ of the $j$th iteration, we solve the following finite time optimal control problem,
\begin{equation}\label{eq:FTOCP}
	\begin{aligned}
		J_{t\rightarrow t+N}^{\scalebox{0.4}{LMPC},j}(x_t^j) = \min_{{\bf{U}}_t^j } \quad & \bigg[  \sum_{k=t}^{t+N-1}  \mathds{1}_{x_F}(x_{k|t}^j) +Q_{t+N}^{j-1}(x_{t+N|t}^j)\bigg] \\
		\text{s.t.}\quad 
		&x_{k+1|t}^j=f( x_{k|t}^j, u_{k|t}^j ), \forall k = t, \cdots, t+N-1 \\
		&x_{k|t}^j \in \mathcal{X}, u_{k|t}^j \in \mathcal{U}, \forall k = t, \cdots, t+N-1\\ 
		& x_{t+N|t}^j \in ~\mathcal{SS}_{t+N}^{j-1}\\
		& x_{t|t}^j=x_t^j\\
	\end{aligned}
\end{equation}
where ${\bf{U}}_t^j = [u_{t|t}^j,\ldots,u_{t+N-1|t}^j] \in \mathbb{R}^{d \times N}$. The solution to the above finite time optimal control problem steers system~\eqref{eq:system} from $x_t^j$ to the time varying safe set $\mathcal{SS}^{j-1}_{t+N}$ while satisfying state, input and dynamic constraints. Let 
\begin{equation}\label{eq:optSolution}
\begin{aligned}
    {\bf{U}}_t^{j,*} &= [u_{t|t}^{j,*},\ldots,u_{t+N-1|t}^{j,*}]
\end{aligned}
\end{equation}
be the optimal solution to~\eqref{eq:FTOCP} at time $t$ of the $j$th iteration. Then, we apply to system \eqref{eq:system} the first element of the optimizer vector,
\begin{equation}\label{eq:policyLMPC}
    u_t^j = \pi^{\scalebox{0.4}{LMPC},j}_t(x_t^j)  = u_{t|t}^{j,*}.
\end{equation}
The finite time optimal control problem \eqref{eq:FTOCP} is repeated at time $t+1$, based on the new state $x_{t+1|t+1} = x_{t+1}^j$, until the iteration is terminated when $x_{t+1}^j = x_F$.

Computing the control action from the LMPC policy~\eqref{eq:policyLMPC} requires to solve a mixed-integer optimization problem, as $\mathcal{SS}^{j-1}_t$ is a set of discrete states. In particular, the number of integer variables grows as more iterations are stored. In Section~\ref{sec:dataReduction}, we will show that the computational cost may be reduced synthesizing the LMPC policy~\eqref{eq:policyLMPC} using a subset of the stored data. Finally, in the result section we will show that the number of data points used in the synthesis process affects the performance improvement at each iteration. Therefore, there is a trade-off between the online computational burden and the number of iterations needed to reach desirable closed-loop performance. 

\subsection{Relaxed LMPC: Nonlinear Formulation}\label{sec:LMPC_Nonlinear}
In this section, we present a relaxed LMPC constructed using the convex safe set $\mathcal{CS}^{j-1}_t$ from~\eqref{eq:CS}.
At each time $t$ of the $j$th iteration, we solve the following finite time optimal control problem
\begin{equation}\label{eq:RelaxedFTOCP}
	\begin{aligned}
			\bar J_{t\rightarrow t+N}^{\scalebox{0.4}{LMPC},j}(x_t^j) =\min_{{\bf{U}}_t^j, \boldsymbol{\lambda}_t^j \geq 0 } & \quad \sum_{k=t}^{t+N-1}  \mathds{1}_{x_F}(x_{k|t}^j) + \sum_{i = 0}^{j-1} \sum_{k = \delta^{j-1,i}_{t+N}}^{T^i}  \lambda_{k}^i J^i_{k\rightarrow T^i}(x_{k}^{i})  \\
			\text{s.t.}\quad
			&\quad x_{k+1|t}^j=f( x_{k|t}^j, u_{k|t}^j ), \forall k = t, \cdots, t+N-1 \\
			&\quad x_{k|t}^j \in \mathcal{X}, u_{k|t}^j \in \mathcal{U}, \forall k = t, \cdots, t+N-1\\ 
			&\quad \sum_{i = 0}^{j-1} \sum_{k = \delta^{j-1,i}_{t+N}}^{T^i} \lambda_{k}^i x_k^i = x^{j}_{t+N|t},\\
			&\quad \sum_{i = 0}^{j-1} \sum_{k = \delta^{j-1,i}_{t+N}}^{T^i} \lambda_{k}^i = 1 \\
			&\quad  x_{t|t}^j=x_t^j\\
	\end{aligned}
\end{equation}
where ${\bf{U}}_t^j = [u_{t|t}^j,\ldots,u_{t+N-1|t}^j] \in \mathbb{R}^{d \times N}$ and the vector $\boldsymbol{\lambda}_t^j = [\lambda_{0}^0, \ldots, \lambda_{T^{j-1}}^{j-1}] \in \mathbb{R}^{\Pi_{i=0}^{j-1} T^i}$ parameterizes the terminal constraint set $\mathcal{CS}^{j-1}_{t+N}$ and terminal cost $\bar Q^{j-1}_{t+N}(\cdot)$. Let 
\begin{equation}\label{eq:optSolutionRelaxed}
\begin{aligned}
{\bf{U}}_t^{j,*} &= [u_{t|t}^{j,*},\ldots,u_{t+N-1|t}^{j,*}] \\
\boldsymbol{\lambda}_t^{j,*} &= [\lambda_{0}^{j,*}, \ldots, \lambda_{T^{h-1}}^{j,*}].
\end{aligned}
\end{equation}
be the optimal solution to~\eqref{eq:RelaxedFTOCP} at time $t$ of the $j$th iteration. Then, we apply to the system \eqref{eq:system} the first element of the optimal input sequence,
\begin{equation}\label{eq:policyLMPCRelaxed}
u_t^j = \bar \pi^{\scalebox{0.4}{LMPC},j}_t(x_t^j)  = u_{t|t}^{j,*}.
\end{equation}

Notice that the terminal constraint $x_{t+N|t}^j \in \mathcal{CS}_t^j$ in~\eqref{eq:RelaxedFTOCP} is enforced using equality constraint on the state $x_{t+N|t}^j$ and multiplier $\lambda_k^i$, and inequality constraints on the multipliers $\lambda_k^i$. Therefore the computation burden is reduced with respect to the LMPC from Section~\ref{sec:LMPC_MIP}. In the next section, we will show that for a class of nonlinear systems the relaxed LMPC~\eqref{eq:RelaxedFTOCP} and \eqref{eq:policyLMPCRelaxed} guarantees safety and non-decreasing performance.

\section{Properties}\label{sec:properties}
This section describes the properties of the proposed control strategies. We show that the LMPC guarantees constraint satisfaction at all time instants, convergence in finite time to $x_F$ and non-decreasing performance. Furthermore, we demonstrate that the same properties are guaranteed when the relaxed LMPC is in closed-loop with a specific class of nonlinear systems or when a sufficient condition on the stored data and the system dynamics is satisfied.

\subsection{Recursive Feasibility}
We assume that a feasible trajectory which drives the system from the starting point $x_S$ to the terminal state $x_F$ is given. Afterwards, we show that the controller recursively satisfies state and input constraints~\eqref{eq:stateInputConstr}.

\begin{assumption}\label{ass:feasTrajGiven}
At iteration $j=0$, we are given the closed-loop trajectory and associated input sequence
\begin{equation*}
\begin{aligned}
 {\bf{x}}^0 = [x_0^0,\ldots,~x_{T^0}^0]\text{ and }{\bf{u}}^0 = [u_0^0,\ldots,~u_{ T^0}^0], 
\end{aligned}
\end{equation*}
which satisfy state and input constraints~\eqref{eq:stateInputConstr}. Furthermore, we have that $x_0^0 = x_S$ and $x_{T^0}^0 =x_F$.
\end{assumption}


\begin{theorem}\label{th:recFeas}
Consider system \eqref{eq:system} controlled by the \mbox{LMPC} \eqref{eq:FTOCP} and \eqref{eq:policyLMPC}.
Let $\mathcal{SS}^{j}_t$ be the time varying safe set at iteration $j$ as defined in \eqref{eq:SS}. Let Assumption~\ref{ass:feasTrajGiven} hold and assume that $x_0^j = x_S ~\forall j\geq 0$. Then at every iteration $j\geq1$ the LMPC~\eqref{eq:FTOCP} and \eqref{eq:policyLMPC} is feasible for all $t \geq 0$ when \eqref{eq:policyLMPC} is applied to system~\eqref{eq:system}.
\end{theorem}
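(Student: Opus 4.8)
The plan is to prove feasibility by induction on the time index $t$, for each fixed iteration $j \geq 1$, with the argument resting entirely on the forward-invariance property of the safe set recorded just after \eqref{eq:SS}: whenever a stored state lies in $\mathcal{SS}^{j-1}_{\tau}$, there is an admissible input $u \in \mathcal{U}$ steering it into $\mathcal{SS}^{j-1}_{\tau+1}$. First I would observe that this presupposes $\mathcal{SS}^{j-1}_t$ to be well defined, i.e. that iteration $j-1$ was completed and reached $x_F$ so that $T^{j-1}$ and the trajectory $\mathbf{x}^{j-1}$ feeding \eqref{eq:SS} exist. For $j=1$ this is exactly Assumption~\ref{ass:feasTrajGiven}; for larger $j$ it is supplied inductively, since feasibility of iteration $j-1$ together with the terminal safe-set constraint of \eqref{eq:FTOCP} forces the task to terminate at $x_F$.

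For the base case $t=0$ I would exhibit an explicit feasible candidate rather than invoke optimality. Since $x_0^j = x_S = x_0^i$ for every previously stored iteration $i$, I would follow the fastest stored trajectory, i.e. the index $i^*$ with $T^{i^*} = T^{j-1,*}$, applying its recorded inputs $[u_0^{i^*}, \ldots, u_{N-1}^{i^*}]$ over the horizon. The predicted states then coincide with $x_0^{i^*}, \ldots, x_N^{i^*}$, so the dynamic and the state/input constraints in \eqref{eq:FTOCP} hold automatically because trajectory $i^*$ was admissible. For the terminal constraint I would evaluate \eqref{eq:deltaTime} at $i = i^*$: because $T^{i^*} = T^{j-1,*}$ one gets $\delta^{j-1,i^*}_N = \min(N, T^{i^*}) = N$ (using $N \leq T^{j-1,*}$), so $x_N^{i^*}$ is one of the states collected in $\mathcal{SS}^{j-1}_N$ and the terminal requirement $x_{N|0}^j \in \mathcal{SS}^{j-1}_N$ is met.

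For the inductive step I would assume \eqref{eq:FTOCP} feasible at time $t$ with optimal sequence $\mathbf{U}_t^{j,*}$ and predicted trajectory $x_{t|t}^{j,*}, \ldots, x_{t+N|t}^{j,*}$, where $x_{t+N|t}^{j,*} \in \mathcal{SS}^{j-1}_{t+N}$. After applying \eqref{eq:policyLMPC} the realized state is $x_{t+1}^j = x_{t+1|t}^{j,*}$. I would then construct the candidate at time $t+1$ by shifting, $[u_{t+1|t}^{j,*}, \ldots, u_{t+N-1|t}^{j,*}, u^+]$, where $u^+ \in \mathcal{U}$ is the input furnished by the forward-invariance property applied to the stored state $x_{t+N|t}^{j,*} \in \mathcal{SS}^{j-1}_{t+N}$, so that $f(x_{t+N|t}^{j,*}, u^+) \in \mathcal{SS}^{j-1}_{t+N+1}$. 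The first $N-1$ stages inherit feasibility from the optimal solution, the appended input is admissible, and the new terminal state lies in $\mathcal{SS}^{j-1}_{(t+1)+N}$, which is precisely the terminal set of \eqref{eq:FTOCP} at time $t+1$; hence the shifted sequence is feasible and so is the problem solved by the controller.

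The main obstacle, and the place I would be most careful, is the time-index bookkeeping induced by the shifting window $\delta^{j,i}_t$ of \eqref{eq:deltaTime}: one must check that the safe set rotates consistently, i.e. that applying forward invariance to $\mathcal{SS}^{j-1}_{t+N}$ returns membership in $\mathcal{SS}^{j-1}_{t+N+1}$ with matching indices (equivalently, that $\delta^{j-1,i}_{t+N+1} = \delta^{j-1,i}_{t+N}+1$ away from saturation, so the successor of an included stored state is again included), and that the base-case evaluation of $\delta^{j-1,i^*}_N$ genuinely equals $N$. This last point is exactly where the standing requirement $N \leq T^{j-1,*}$ (horizon no longer than the shortest completed run) enters, and I would state it explicitly to keep the base case valid.
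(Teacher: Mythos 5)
Your proof is correct and follows essentially the same route as the paper's: induction on $t$, with the inductive step obtained by shifting the optimal input sequence and appending the stored input associated with the terminal stored state, and the base case built from a stored trajectory. The one place where you genuinely improve on the paper's write-up is the base case: the paper uses the $(j-1)$th trajectory $[x_0^{j-1},\ldots,x_N^{j-1}]$ as the candidate, but its terminal state $x_N^{j-1}$ lies in $\mathcal{SS}^{j-1}_N$ only when $T^{j-1}=T^{j-1,*}$, since by \eqref{eq:deltaTime} one has $\delta^{j-1,j-1}_N=\min(N+T^{j-1}-T^{j-1,*},\,T^{j-1})>N$ otherwise; that equality is only established later via Theorem~\ref{th:convergence}. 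Your choice of the fastest stored trajectory $i^*$ with $T^{i^*}=T^{j-1,*}$ gives $\delta^{j-1,i^*}_N=N$ directly and avoids this circularity. Likewise, flagging the requirement $N\leq T^{j-1,*}$ explicitly is appropriate: Theorem~\ref{th:recFeas} omits it, and it appears in the paper only as the assumption $T^j>N$ in Theorem~\ref{th:convergence}; without it $\mathcal{SS}^{j-1}_N=\{x_F\}$ and the base case would additionally need $0\in\mathcal{U}$ to pad the stored trajectory at the equilibrium. The only imprecision on your side is the side remark that feasibility of iteration $j-1$ by itself ``forces'' termination at $x_F$ (needed for $\mathcal{SS}^{j-1}_t$ to be well defined): feasibility only guarantees that predicted trajectories end in the safe set, while actual finite-time arrival of the closed loop relies on the cost-decrease argument of Proposition~\ref{prop:convNSteps} and Theorem~\ref{th:convergence}, so well-definedness is really supplied by a joint induction over iterations that interleaves both results — a point the paper itself also leaves implicit.
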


\begin{proof}
The proof follows from standard MPC arguments. Assume that the LMPC \eqref{eq:FTOCP} and \eqref{eq:policyLMPC} is feasible at time $t$, let \eqref{eq:optSolution} be the optimal solution and $x_{t+N|t}^{j,*} = x_{k^*}^{i^*}$, where $x_{k^*}^{i^*}$ is defined in~\eqref{eq:argMin}.
Then, we have that the following state trajectory and associated input sequence
\begin{equation}\label{eq:candidateSolution}
\begin{aligned}
    &[x_{t+1|t}^{j,*}, \ldots, x_{t+N|t}^{j,*} = x_{k^*}^{i^*}, x_{k^* + 1}^{i^*}] \\
    &[u_{t+1|t}^{j,*}, \ldots, u_{t+N-1|t}^{j,*}, u_{k^*}^{i^*}],
\end{aligned}
\end{equation}
satisfy input and state constraints~\eqref{eq:stateInputConstr} and the LMPC at time $t+1$ of the $j$th iteration is feasible.\\
Now notice that at time $t=0$ of the $j$th iteration, the state trajectory and associated input sequence
\begin{equation}\label{eq:feasTrajTime0}
\begin{aligned}
    [x_0^{j-1}, \ldots, x_{N}^{j-1}] \text{ and } [u_0^{j-1}, \ldots, u_{N-1}^{j-1}],
\end{aligned}
\end{equation}
satisfy input and state constraints. Therefore the LMPC is feasible at time $t=0$ of the $j$th iteration. Finally, we conclude by induction that the LMPC \eqref{eq:FTOCP} and \eqref{eq:policyLMPC} is feasible for all $t \geq 0$ and iteration $j\geq1$.
\end{proof}

Next, we consider a specific class of nonlinear systems which satisfies the following assumption.
\begin{assumption}\label{ass:simplified}
Given any $P$ states $x^{(i)} \in \mathcal{X}$ and inputs $u^{(i)} \in \mathcal{U}$ for $i\in \{1,\ldots, P\}$, we have that $\forall x \in \text{Conv}(x^{(1)}, \ldots, x^{(P)})$ there exists $ u \in \mathcal{U}$ such that
\begin{equation*}
    f(x,u) \in \text{Conv}\big( f(x^{(1)},u^{(1)}), \ldots, f(x^{(P)},u^{(P)}) \big)
\end{equation*}
where $f(\cdot, \cdot)$ is defined in~\eqref{eq:system}.
\end{assumption}


Finally, we show that if Assumption~\ref{ass:simplified} is satisfied and the constraint sets in~\eqref{eq:stateInputConstr} are convex, then the relaxed LMPC~\eqref{eq:RelaxedFTOCP} and \eqref{eq:policyLMPCRelaxed} in closed-loop with system~\eqref{eq:system} guarantees recursive state and input constraint satisfaction.

\begin{assumption}\label{ass:convexity}
The state and  input constraint sets $\mathcal{X}$ and $\mathcal{U}$ in~\eqref{eq:stateInputConstr} are convex.
\end{assumption}

\begin{theorem}\label{th:recFeasSimple}
Consider system \eqref{eq:system} controlled by the relaxed \mbox{LMPC} \eqref{eq:RelaxedFTOCP} and \eqref{eq:policyLMPCRelaxed}.
Let $\mathcal{CS}^{j}_t$ be the convex safe set at iteration $j$ as defined in \eqref{eq:CS}. Let Assumptions~\ref{ass:feasTrajGiven}-\ref{ass:convexity} hold and assume that $x_0^j = x_S ~\forall j\geq 0$, then at every iteration $j\geq1$ the relaxed LMPC~\eqref{eq:RelaxedFTOCP} and \eqref{eq:policyLMPCRelaxed} is feasible for all $t \geq 0$ when \eqref{eq:policyLMPCRelaxed} is applied to system~\eqref{eq:system}.
\end{theorem}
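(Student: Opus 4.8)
The plan is to mirror the inductive argument in the proof of Theorem~\ref{th:recFeas}, replacing the discrete terminal set $\mathcal{SS}^{j-1}_{t+N}$ by the convex safe set $\mathcal{CS}^{j-1}_{t+N}$ from~\eqref{eq:CS} and invoking Assumption~\ref{ass:simplified} to construct the extra admissible control action needed to close the induction. First I would fix an iteration $j\geq 1$ and argue by induction on $t$. Assume the relaxed LMPC~\eqref{eq:RelaxedFTOCP} is feasible at time $t$ with optimizer ${\bf U}_t^{j,*}$, $\boldsymbol\lambda_t^{j,*}$, so that the predicted terminal state satisfies $x_{t+N|t}^{j,*}=\sum_{i=0}^{j-1}\sum_{k=\delta^{j-1,i}_{t+N}}^{T^i}\lambda_k^{i,*}x_k^i\in\mathcal{CS}^{j-1}_{t+N}$. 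As in~\eqref{eq:candidateSolution}, the candidate at time $t+1$ uses the shifted inputs $[u_{t+1|t}^{j,*},\ldots,u_{t+N-1|t}^{j,*}]$, which together with the shifted states $[x_{t+1|t}^{j,*},\ldots,x_{t+N|t}^{j,*}]$ inherit the state and input constraints from the optimizer at time $t$; here I would also invoke Assumption~\ref{ass:convexity} to guarantee $x_{t+N|t}^{j,*}\in\mathcal{CS}^{j-1}_{t+N}\subseteq\mathcal{X}$, since $\mathcal{CS}^{j-1}_{t+N}$ is a convex combination of stored states that all lie in the convex set $\mathcal{X}$.

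The core step is to append one admissible control action that steers $x_{t+N|t}^{j,*}$ back into the convex safe set one step ahead. I would apply Assumption~\ref{ass:simplified} with the $P$ points taken to be the stored states $\{x_k^i\}$ active in the representation of $x_{t+N|t}^{j,*}$ and the associated stored inputs $\{u_k^i\}$: since $x_{t+N|t}^{j,*}\in\text{Conv}(\{x_k^i\})$, there exists $u_{t+N|t+1}^j\in\mathcal{U}$ with $f(x_{t+N|t}^{j,*},u_{t+N|t+1}^j)\in\text{Conv}(\{f(x_k^i,u_k^i)\})=\text{Conv}(\{x_{k+1}^i\})$. It then remains to show that this image hull is contained in $\mathcal{CS}^{j-1}_{t+N+1}$. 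For this I would use the identity $\delta^{j-1,i}_{t+N+1}=\min(\delta^{j-1,i}_{t+N}+1,T^i)$, which follows directly from~\eqref{eq:deltaTime}, together with the equilibrium convention $x_{T^i+1}^i=f(x_F,0)=x_F=x_{T^i}^i$, to conclude that the index-shifted set $\{x_{k+1}^i: k\in[\delta^{j-1,i}_{t+N},T^i]\}$ coincides with $\{x_m^i: m\in[\delta^{j-1,i}_{t+N+1},T^i]\}$, whose convex hull is exactly $\mathcal{CS}^{j-1}_{t+N+1}$. Hence the extended candidate satisfies the terminal constraint of~\eqref{eq:RelaxedFTOCP} at time $t+1$, establishing the inductive step.

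For the base case $t=0$ I would exhibit a feasible point directly from stored data: since $x_0^j=x_S$ equals the initial state of the best stored trajectory $i^\star\in\operatorname{argmin}_{k\le j-1}T^k$, the first $N$ states and inputs of that trajectory form a feasible candidate whose terminal state $x_N^{i^\star}$ lies in $\mathcal{SS}^{j-1}_N\subseteq\mathcal{CS}^{j-1}_N$, because $\delta^{j-1,i^\star}_N=N$. Feasibility of the underlying stored data for $j=0$ is supplied by Assumption~\ref{ass:feasTrajGiven}. Combining the base case with the inductive step yields feasibility for all $t\ge 0$ at every iteration $j\ge 1$.

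I expect the main obstacle to be the core step above: verifying that the convex hull of the one-step images of the active stored states is contained in the next convex safe set. The delicate points are (i) guaranteeing via Assumption~\ref{ass:simplified} that a \emph{single} admissible input simultaneously handles all the active vertices of $\mathcal{CS}^{j-1}_{t+N}$, and (ii) correctly accounting for the time-index shift in~\eqref{eq:deltaTime} and the boundary behaviour at $k=T^i$, where the equilibrium property $f(x_F,0)=x_F$ is what keeps the shifted index set inside the admissible range.
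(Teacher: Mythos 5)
Your proposal is correct and takes essentially the same route as the paper: the paper's (very terse) proof simply asserts your core step---that Assumption~\ref{ass:simplified} implies for every $x \in \mathcal{CS}^{j}_t$ there exists $u \in \mathcal{U}$ with $f(x,u) \in \mathcal{CS}^{j}_{t+1}$---and then defers to ``standard MPC arguments,'' which is exactly your shifted-candidate induction. The extra details you supply (the index-shift identity for $\delta^{j,i}_t$, the $x_F$-padding at $k = T^i$, and the base case built from the fastest stored trajectory) are precisely the bookkeeping the paper leaves implicit.
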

\begin{proof}
We notice that by Assumption~\ref{ass:simplified} it follows that $\forall x \in \mathcal{CS}^{j}_t$ there exists $u \in \mathcal{U}$ such that $f(x,u) \in \mathcal{CS}^{j}_{t+1}$. Therefore, the recursive feasibility property follows from standard MPC arguments~\cite{BorrelliBemporadMorari_book, mayne2000constrained}.
\end{proof}

\subsection{Convergence and Performance Improvement}
We show that the closed-loop system \eqref{eq:system} and \eqref{eq:policyLMPC} converges in finite time to the terminal state $x_F$. 
Furthermore, the time $T^j$ at which the closed-loop system converges to the terminal state $x_F$ is non-increasing with the iteration index, i.e., $T^j \leq T^i, \forall i \in \{0, \ldots, j-1\}$. 
In the following, we present a side result which will be used in the main theorem. 
\begin{proposition}\label{prop:convNSteps}
Consider system \eqref{eq:system} controlled by the \mbox{LMPC}~\eqref{eq:FTOCP} and \eqref{eq:policyLMPC}. Assume that $\mathcal{SS}^{j-1}_t = x_F$ and $Q^{j-l}_{t}=0$ for all $t\geq0$. If at time $t$ Problem~\eqref{eq:FTOCP} is feasible, then the closed-loop system~\eqref{eq:system} and \eqref{eq:policyLMPC} converges in at most $t+N$ time steps to $x_F$.
\end{proposition}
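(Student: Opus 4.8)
The plan is to run the standard receding-horizon stability argument, specialized to the minimum-time (indicator) cost and to the degenerate terminal data $\mathcal{SS}^{j-1}_{t+N} = x_F$, $Q^{j-1}_{t+N} = 0$. Under these assumptions the terminal constraint $x_{t+N|t}^j \in \mathcal{SS}^{j-1}_{t+N}$ in~\eqref{eq:FTOCP} collapses to the equality $x_{t+N|t}^j = x_F$ and the terminal cost vanishes, so the problem simply asks to steer $x_t^j$ to $x_F$ in exactly $N$ steps while minimizing the number of predicted stages that are not at $x_F$. Two elementary observations will drive everything. First, the objective is a sum of $N$ indicator terms plus a zero terminal cost, hence $0 \le J^\star_t \le N$, where $J^\star_t$ denotes the optimal value of~\eqref{eq:FTOCP} at time $t$. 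Second, $J^\star_t = 0$ forces every predicted stage, in particular the first one $x_t^j$, to coincide with $x_F$.

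Next I would establish recursive feasibility together with a per-step decrease of the optimal cost. Given the optimal solution~\eqref{eq:optSolution} at time $t$, which necessarily satisfies $x_{t+N|t}^{j,*} = x_F$, I build a candidate at time $t+1$ by dropping the first input, shifting the remaining predicted states and inputs, and appending the pair $(x_F, 0)$ at the tail. This candidate is feasible: the appended transition respects the dynamics because $f(x_F,0)=x_F$ by the equilibrium assumption on $x_F$, it respects the state and input constraints because $x_F \in \mathcal{X}$ and $0 \in \mathcal{U}$, and its terminal state is again $x_F \in \mathcal{SS}^{j-1}_{t+1+N}$. Comparing the indicator cost of this candidate with $J^\star_t$, both sums contain the common stages $k=t+1,\dots,t+N-1$; the candidate's remaining stage $k=t+N$ carries $x_{t+N|t}^{j,*}=x_F$ and contributes $\mathds{1}_{x_F}(x_F)=0$, whereas the optimal sum's remaining stage $k=t$ contributes $\mathds{1}_{x_F}(x_t^j)$. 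Since the optimal value at $t+1$ is no larger than the candidate cost, this gives
\begin{equation*}
J^\star_{t+1} \le J^\star_t - \mathds{1}_{x_F}(x_t^j).
\end{equation*}
Thus, as long as the closed-loop state has not reached $x_F$, the indicator equals $1$ and the optimal cost strictly decreases by at least one unit.

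Finally I would chain this inequality. Suppose the closed-loop system~\eqref{eq:system} and \eqref{eq:policyLMPC} has not reached $x_F$ at any of the times $t,\dots,t+N-1$; then the strict decrease applies $N$ consecutive times, yielding $J^\star_{t+N} \le J^\star_t - N \le 0$, and combined with $J^\star_{t+N} \ge 0$ this forces $J^\star_{t+N} = 0$. By the second observation above, $J^\star_{t+N}=0$ implies $x_{t+N}^j = x_F$. Hence in every case the terminal state is reached no later than time $t+N$, which is exactly the claim; recursive feasibility along the way is guaranteed by repeating the tail-extension candidate at each intermediate step, so that the decrease inequality remains available until $x_F$ is attained.

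The main obstacle is bookkeeping rather than any genuine difficulty: one must verify that the candidate built at $t+1$ truly satisfies the collapsed terminal equality and that the two indicator sums telescope so the only surviving difference is the single stage $\mathds{1}_{x_F}(x_t^j)$. The facts the argument leans on are precisely the equilibrium property $f(x_F,0)=x_F$ and the admissibility of the trivial input $u=0$ at $x_F$, since these are what make the tail extension feasible and what keep the appended stage from contributing to the cost. I would therefore flag these two points as the places where care is needed.
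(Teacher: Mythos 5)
Your proof is correct and follows essentially the same route as the paper's: bound the optimal cost by $N$ using feasibility, establish the one-step cost-decrease inequality via the shifted candidate (appending the equilibrium pair $(x_F,0)$, exactly the paper's equation~\eqref{eq:LyDec}), and chain the decrease to force convergence within $N$ steps. The only differences are cosmetic --- you chain one extra step to conclude via $J^\star_{t+N}=0$ where the paper stops at $J \leq 1$ at time $t+N-1$, and you are more explicit than the paper about why the shifted candidate is feasible (the paper simply asserts invariance of $\{x_F\}$) --- which is a welcome bit of extra care.
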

\begin{proof}
By assumption, Problem~\eqref{eq:FTOCP} is feasible at time $t$ and there exists a sequence of feasible inputs which steers the system from $x_t^j$ to $x_F$ in at most $N+1$ steps. Therefore, we have that
\begin{equation}\label{eq:costBound}
    J_{t\rightarrow t+N}^{\scalebox{0.4}{LMPC},j}(x_t^j) \leq N.
\end{equation}
Furthermore, as $\mathcal{SS}^{j-1}_t = x_F$ is an invariant and $Q^{j-l}_{t}=0$, we have that the \mbox{LMPC}~\eqref{eq:FTOCP} and \eqref{eq:policyLMPC} is feasible at all time instants and
\begin{equation}\label{eq:LyDec}
    J_{t\rightarrow t+N}^{\scalebox{0.4}{LMPC},j}(x_t^j) \geq \mathds{1}(x_t) + J_{t+1\rightarrow t+1+N}^{\scalebox{0.4}{LMPC},j}(x_t^j).
\end{equation}
Now, we assume that $x_{i} \neq x_F \forall i \in \{t, \ldots, t+N-1\}$ . Therefore by~\eqref{eq:costBound}-\eqref{eq:LyDec} we have that at time $k = t+N-1$
\begin{equation*}
    J_{k \rightarrow k +N}^{\scalebox{0.4}{LMPC},j}(x_{k}^j) \leq J_{t\rightarrow t+N}^{\scalebox{0.4}{LMPC},j}(x_t^j) - \sum_{i=t}^{k} \mathds{1}(x_i) \leq N - (N - 1) = 1
\end{equation*}
which implies that $x_{ k+1}=x_{t+N} = x_{t+N|t+N-1}^{j,*} = x_F$.
\end{proof}

\begin{theorem}\label{th:convergence}
Consider system \eqref{eq:system} controlled by the LMPC~\eqref{eq:FTOCP} and \eqref{eq:policyLMPC}.
Let $\mathcal{SS}^{j}_t$ be the time varying safe set at iteration $j$ as defined in \eqref{eq:SS}. Let Assumption~\ref{ass:feasTrajGiven} hold and assume that $x_0^j = x_S$ and $T^j>N$, $\forall j\geq 0$. Then the time $T^j$ at which the closed-loop system \eqref{eq:system} and \eqref{eq:policyLMPC} converges to $x_F$ is non-increasing with the iteration index,
\begin{equation*}
    T^j \leq T^k,~\forall k\in \{0,\ldots, j-1\}.
\end{equation*}
\end{theorem}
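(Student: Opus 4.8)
The plan is to treat the optimal LMPC value function $J_{t\rightarrow t+N}^{\scalebox{0.4}{LMPC},j}(\cdot)$ from~\eqref{eq:FTOCP} as a time-varying Lyapunov-like function along the closed-loop trajectory of iteration $j$, and to show that its total decrease accumulates exactly the stage cost $\mathds{1}_{x_F}$. Since the claim $T^j\le T^k$ for all $k\in\{0,\ldots,j-1\}$ is equivalent to $T^j\le T^{j-1,*}=\min_{k\le j-1}T^k$, it suffices to bound the closed-loop cost of iteration $j$, which equals $\sum_{t=0}^{T^j-1}\mathds{1}_{x_F}(x_t^j)=T^j$, by $T^{j-1,*}$. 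First I would invoke Theorem~\ref{th:recFeas} so that the value function is well defined for all $t\ge 0$ along the closed loop, and record that $\mathds{1}_{x_F}(x_t^j)=1$ for every $t<T^j$.

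The heart of the argument is a per-step decrease inequality. At time $t$, let \eqref{eq:optSolution} be optimal with terminal predicted state $x_{t+N|t}^{j,*}=x_{k^*}^{i^*}\in\mathcal{SS}^{j-1}_{t+N}$, where $(i^*,k^*)$ are as in~\eqref{eq:argMin}. I would reuse the shifted candidate~\eqref{eq:candidateSolution}, whose feasibility at $t+1$ is already granted by Theorem~\ref{th:recFeas} and whose terminal state is the successor $x_{k^*+1}^{i^*}$ along the stored trajectory, and evaluate its cost. The key identity is the one-step consistency of the cost-to-go~\eqref{eq:costToGo}: since $Q_{t+N}^{j-1}(x_{k^*}^{i^*})=J^{i^*}_{k^*\to T^{i^*}}(x_{k^*}^{i^*})=\mathds{1}_{x_F}(x_{k^*}^{i^*})+J^{i^*}_{k^*+1\to T^{i^*}}(x_{k^*+1}^{i^*})$ and $Q_{t+N+1}^{j-1}(x_{k^*+1}^{i^*})\le J^{i^*}_{k^*+1\to T^{i^*}}(x_{k^*+1}^{i^*})$ by definition~\eqref{eq:Qfunction}, the terminal cost drops by at least one, cancelling the newly appended stage cost. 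Feasibility of the candidate together with optimality then yields
\begin{equation*}
J_{t+1\rightarrow t+1+N}^{\scalebox{0.4}{LMPC},j}(x_{t+1}^j)\le J_{t\rightarrow t+N}^{\scalebox{0.4}{LMPC},j}(x_{t}^j)-\mathds{1}_{x_F}(x_t^j).
\end{equation*}

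Next I would telescope this inequality from $t=0$ to $t=T^j-1$. Using that from $x_{T^j}^j=x_F$ the choice $u\equiv 0$ keeps the prediction at $x_F$ at zero stage cost, with terminal state $x_F\in\mathcal{SS}^{j-1}_{T^j+N}$ and $Q_{T^j+N}^{j-1}(x_F)=0$, so that $J_{T^j\rightarrow T^j+N}^{\scalebox{0.4}{LMPC},j}(x_F)=0$, the telescoping collapses to $T^j=\sum_{t=0}^{T^j-1}\mathds{1}_{x_F}(x_t^j)\le J_{0\rightarrow N}^{\scalebox{0.4}{LMPC},j}(x_0^j)$. It then remains to bound the initial value. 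Taking the index $i^\star$ that attains $T^{j-1,*}=T^{i^\star}$, the $N$-step prefix of that stored trajectory is a feasible candidate at $t=0$ (its terminal state $x_N^{i^\star}$ lies in $\mathcal{SS}^{j-1}_N$ precisely because $\delta^{j-1,i^\star}_N=N$ under $T^{i^\star}=T^{j-1,*}>N$), and its cost is bounded by $\sum_{k=0}^{T^{i^\star}}\mathds{1}_{x_F}(x_k^{i^\star})=T^{i^\star}=T^{j-1,*}$. Hence $J_{0\rightarrow N}^{\scalebox{0.4}{LMPC},j}(x_0^j)\le T^{j-1,*}$, and chaining the two bounds gives $T^j\le T^{j-1,*}\le T^k$ for all $k\le j-1$.

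The main obstacle I anticipate is the bookkeeping around the time-varying safe-set index $\delta^{j-1,i}_t$ in~\eqref{eq:deltaTime}. The cost-to-go consistency used in the decrease step requires that the successor $x_{k^*+1}^{i^*}$ genuinely be an admissible candidate in the $Q$-minimization~\eqref{eq:Qfunction} at time $t+N+1$, i.e. that it belong to $\mathcal{SS}^{j-1}_{t+N+1}$ (the ``shift by one'' property, inherited from Theorem~\ref{th:recFeas}); and the initial bound requires the prefix of the \emph{best} stored trajectory, rather than the most recent one, to satisfy the terminal constraint at $t=0$, which works only because selecting the minimizer $T^{i^\star}=T^{j-1,*}$ makes $\delta$ collapse to $N$. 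The remaining delicate point is the boundary case $k^*=T^{i^*}$, where the predicted terminal state already equals $x_F$; there I would use that $x_F$ lies in every $\mathcal{SS}^{j-1}_t$ and carries zero cost-to-go, so the decrease inequality holds with equality.
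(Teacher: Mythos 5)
Your proof is correct, but it takes a genuinely different route from the paper's. The paper argues structurally through the time-varying safe set: by \eqref{eq:deltaTime}, at time $\bar t = T^{j-1,*}-N$ the terminal set collapses, $\mathcal{SS}^{j-1}_{\bar t+N}=\mathcal{SS}^{j-1}_{T^{j-1,*}}=\{x_F\}$, so the terminal constraint itself forces the predicted terminal state onto $x_F$; Proposition~\ref{prop:convNSteps} (a cost-decrease argument confined to this terminal phase) then yields convergence within $N$ further steps, i.e.\ $T^j\le \bar t+N=T^{j-1,*}$, with no need to bound the initial cost. You instead run the classical time-invariant LMPC performance argument: a per-step decrease of the optimal value function via the shifted candidate and one-step consistency of $Q^{j-1}_{t+N}$, telescoped over the whole iteration, plus the initial bound $J^{\mathrm{LMPC},j}_{0\rightarrow N}(x_S)\le T^{j-1,*}$ obtained from the $N$-step prefix of the \emph{best} stored trajectory. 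Your bookkeeping is right precisely where the time-varying structure bites: the shift property $\delta^{j-1,i}_{t+N+1}\le \delta^{j-1,i}_{t+N}+1$ makes $x^{i^*}_{k^*+1}$ admissible in $Q^{j-1}_{t+N+1}$; choosing the minimizer $i^\star$ (rather than iteration $j-1$) is exactly what gives $\delta^{j-1,i^\star}_N=N$ and feasibility of the initial candidate; and $x_F\in\mathcal{SS}^{j-1}_t$ for every $t$ disposes of the boundary case $k^*=T^{i^*}$. One small tightening: your telescoping presumes $T^j<\infty$; to be airtight, note that the decrease inequality plus non-negativity of the value function forces $x^j_t=x_F$ within $J^{\mathrm{LMPC},j}_{0\rightarrow N}(x_S)$ steps, so finiteness comes for free from the same estimate. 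As for what each approach buys: the paper's argument is shorter and exploits the minimum-time, time-varying structure directly, while yours delivers the quantitative intermediate estimate $T^j\le J^{\mathrm{LMPC},j}_{0\rightarrow N}(x^j_0)\le T^{j-1,*}$, which is in fact the very argument the paper itself deploys later for perturbed initial conditions in Theorem~\ref{th:diff_IC}.
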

\begin{proof}
By Theorem~\ref{th:recFeas} we have that the LMPC~\eqref{eq:FTOCP} and \eqref{eq:policyLMPC} is feasible for all time $t\geq0$. Denote 
\begin{equation*}
    T^{j-1,*} = \min_{k \in \{0,\ldots,j-1 \}} T^k
\end{equation*} 
as the minimum time to complete the task associated with the trajectories used to construct $\mathcal{SS}^{j-1}_{t+N}$. By definitions~\eqref{eq:SS}-\eqref{eq:deltaTime}, we have that at time $\bar t = T^{j-1,*} - N$ 
\begin{equation*}
    SS^{j-1}_{\bar t + N} = SS^{j-1}_{T^{j-1,*}} = x_F.
\end{equation*}
Therefore, by Proposition~\ref{prop:convNSteps} the closed-loop system converges in at most $\bar t + N = T^{j-1,*} $ time steps. Finally, we notice that $T^j = \bar t + N = T^{j-1,*} \leq T^k,~\forall k\in \{0,\ldots, j-1\}$.
\end{proof}

Next, we show that if the relaxed LMPC~\eqref{eq:RelaxedFTOCP} and \eqref{eq:policyLMPCRelaxed} is in closed-loop with system~\eqref{eq:system} which satisfies Assumption~\ref{ass:simplified}, then $T^j$ is non-increasing with the iteration index. The proof follows as in Theorem~\ref{th:convergence} leveraging the recursive feasibility of the relaxed LMPC~\eqref{eq:RelaxedFTOCP} and \eqref{eq:policyLMPCRelaxed} from Theorem~\ref{th:recFeasSimple}.

\begin{proposition}\label{prop:convNStepsSimplified}
Consider system \eqref{eq:system} controlled by the LMPC~\eqref{eq:RelaxedFTOCP} and \eqref{eq:policyLMPCRelaxed}. Assume that $\mathcal{CS}^{j-1}_t = x_F$ and $\bar Q^{j-l}_{t}=0$ for all $t\geq0$. If at time $t$ Problem~\eqref{eq:RelaxedFTOCP} is feasible, then the closed-loop system~\eqref{eq:system} and~\eqref{eq:policyLMPCRelaxed} converges in at most $t+N$ time steps to $x_F$.
\end{proposition}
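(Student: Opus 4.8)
The plan is to transcribe the argument of Proposition~\ref{prop:convNSteps} to the relaxed setting, replacing the safe set $\mathcal{SS}^{j-1}_t$ by the convex safe set $\mathcal{CS}^{j-1}_t$, the $Q$-function by $\bar Q^{j-1}_t$, and the LMPC cost by the relaxed cost $\bar J^{\scalebox{0.4}{LMPC},j}_{t\rightarrow t+N}$. First I would establish an upper bound on the relaxed cost. Because $\mathcal{CS}^{j-1}_{t+N}=x_F$ is a singleton, the terminal equality constraints in~\eqref{eq:RelaxedFTOCP} force $x^j_{t+N|t}=x_F$; together with $\bar Q^{j-1}_{t+N}=0$, feasibility at time $t$ implies that the predicted trajectory reaches $x_F$ within the horizon and hence
\begin{equation*}
    \bar J^{\scalebox{0.4}{LMPC},j}_{t\rightarrow t+N}(x_t^j) \leq N.
\end{equation*}

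Next I would prove recursive feasibility and a Lyapunov-type inequality. Since $x_F$ is an unforced equilibrium, $f(x_F,0)=x_F$, so from the optimal solution at time $t$ one builds a candidate at time $t+1$ by shifting the optimal inputs $[u^{j,*}_{t+1|t},\ldots,u^{j,*}_{t+N-1|t}]$ and appending $u=0$; the appended state remains $x_F$, which satisfies the singleton terminal constraint $\mathcal{CS}^{j-1}_{t+N+1}=x_F$ with zero terminal cost. Feasibility of this candidate yields feasibility at all subsequent times and the decrease
\begin{equation*}
    \bar J^{\scalebox{0.4}{LMPC},j}_{t\rightarrow t+N}(x_t^j) \geq \mathds{1}_{x_F}(x_t^j) + \bar J^{\scalebox{0.4}{LMPC},j}_{t+1\rightarrow t+1+N}(x_t^j).
\end{equation*}

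Finally I would telescope. Assuming $x_i\neq x_F$ for all $i\in\{t,\ldots,t+N-1\}$ and combining the decrease with the upper bound gives $\bar J^{\scalebox{0.4}{LMPC},j}_{k\rightarrow k+N}(x_k^j)\leq 1$ at $k=t+N-1$, which forces $x_{t+N}=x_F$ and hence convergence in at most $t+N$ steps. The main obstacle I anticipate is the candidate-solution construction in the relaxed formulation: one must exhibit multipliers $\boldsymbol{\lambda}$ representing the appended terminal state $x_F$ inside $\mathcal{CS}^{j-1}_{t+N+1}$ and verify that the associated terminal cost vanishes, which here is immediate precisely because this set collapses to the singleton $\{x_F\}$ on which $\bar Q^{j-1}_{t+N+1}=0$. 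All remaining steps are a direct transcription of the argument used in Proposition~\ref{prop:convNSteps}.
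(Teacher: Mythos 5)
Your proposal is correct and takes essentially the same route as the paper, whose proof of this proposition is precisely the transcription you describe: repeat the argument of Proposition~\ref{prop:convNSteps} with the relaxed cost $\bar J_{t\rightarrow t+N}^{\scalebox{0.4}{LMPC},j}(\cdot)$ in place of $J_{t\rightarrow t+N}^{\scalebox{0.4}{LMPC},j}(\cdot)$. Your extra detail---that the singleton $\mathcal{CS}^{j-1}_{t+N}=\{x_F\}$ forces $x_{t+N|t}^j=x_F$ through the multiplier constraints, makes the shifted candidate with appended input $u=0$ feasible, and annihilates the terminal cost---is exactly what makes that transcription legitimate.
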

\begin{proof}
The proof follows as in Proposition~\ref{prop:convNSteps} replacing the LMPC cost $J_{t\rightarrow t+N}^{\scalebox{0.4}{LMPC},j}(\cdot)$ with the relaxed LMPC cost $\bar J_{t\rightarrow t+N}^{\scalebox{0.4}{LMPC},j}(\cdot)$.
\end{proof}
\begin{theorem}\label{th:convergenceSimple}
Consider system \eqref{eq:system} controlled by the \mbox{LMPC} \eqref{eq:RelaxedFTOCP} and \eqref{eq:policyLMPCRelaxed}.
Let $\mathcal{CS}^{j}_t$ be the time varying safe set at iteration $j$ as defined in \eqref{eq:CS}. Let Assumptions~\ref{ass:feasTrajGiven}-\ref{ass:convexity} hold and assume that $x_0^j = x_S$ and $T^j >N$, $\forall j\geq 0$. Then the time $T^j$ at which the closed-loop system \eqref{eq:system} and \eqref{eq:policyLMPCRelaxed} converges to $x_F$ is non-increasing with the iteration index,
\begin{equation*}
    T^j \leq T^k,~\forall k\in \{0,\ldots, j-1\}.
\end{equation*}
\end{theorem}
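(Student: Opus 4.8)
The plan is to replicate the argument of Theorem~\ref{th:convergence} almost verbatim, substituting the convex objects for their discrete counterparts: the convex safe set $\mathcal{CS}^{j-1}_t$ for $\mathcal{SS}^{j-1}_t$, the convex $Q$-function $\bar Q^{j-1}_t$ for $Q^{j-1}_t$, the recursive feasibility guarantee of Theorem~\ref{th:recFeasSimple} for that of Theorem~\ref{th:recFeas}, and the finite-horizon convergence result of Proposition~\ref{prop:convNStepsSimplified} for Proposition~\ref{prop:convNSteps}. First I would invoke Theorem~\ref{th:recFeasSimple}, whose hypotheses are exactly Assumptions~\ref{ass:feasTrajGiven}--\ref{ass:convexity} together with $x_0^j = x_S$, to conclude that the relaxed LMPC~\eqref{eq:RelaxedFTOCP} and \eqref{eq:policyLMPCRelaxed} is feasible for all $t \geq 0$.

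Second, I would establish that the terminal objects collapse to the goal at the minimum completion time. Set $T^{j-1,*} = \min_{k \in \{0,\ldots,j-1\}} T^k$. Evaluating \eqref{eq:deltaTime} at $t = T^{j-1,*}$ gives $\delta^{j-1,i}_{T^{j-1,*}} = \min(T^{j-1,*} + T^i - T^{j-1,*}, T^i) = T^i$ for every $i \in \{0,\ldots,j-1\}$, so the only stored states entering \eqref{eq:CS} and \eqref{eq:cvxQfunction} are the terminal states $x_{T^i}^i = x_F$. Hence $\mathcal{CS}^{j-1}_{T^{j-1,*}} = \text{Conv}(\{x_F\}) = x_F$, and since $J^i_{T^i \rightarrow T^i}(x_{T^i}^i) = \mathds{1}_{x_F}(x_F) = 0$, the interpolation \eqref{eq:cvxQfunction} yields $\bar Q^{j-1}_{T^{j-1,*}}(x_F) = 0$.

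Third, I would set $\bar t = T^{j-1,*} - N$, which is well defined and nonnegative because the hypothesis $T^j > N$ for all $j$ forces $T^{j-1,*} > N$. At this time the horizon-$N$ prediction of \eqref{eq:RelaxedFTOCP} terminates in $\mathcal{CS}^{j-1}_{\bar t + N} = \mathcal{CS}^{j-1}_{T^{j-1,*}} = x_F$ with terminal cost $\bar Q^{j-1}_{T^{j-1,*}} = 0$, so the premises of Proposition~\ref{prop:convNStepsSimplified} hold at $\bar t$; that proposition then certifies convergence of the closed loop to $x_F$ in at most $\bar t + N = T^{j-1,*}$ steps. Therefore $T^j \leq T^{j-1,*} \leq T^k$ for every $k \in \{0,\ldots,j-1\}$, which is the claim.

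The part requiring the most care is the transition from the collapse of $\mathcal{CS}^{j-1}_t$ at the single time $T^{j-1,*}$ to the ``for all $t$'' invariance phrasing used in the statement of Proposition~\ref{prop:convNStepsSimplified}. What actually matters is that for all $t \geq T^{j-1,*}$ one has $\delta^{j-1,i}_t = T^i$ --- indeed $t + T^i - T^{j-1,*} \geq T^i$ once $t \geq T^{j-1,*}$ --- so $\mathcal{CS}^{j-1}_t = x_F$ and $\bar Q^{j-1}_t = 0$ on the entire tail of the horizon; this is the property the monotone cost-decrease argument inside Proposition~\ref{prop:convNStepsSimplified} relies on, and it is what lets the equilibrium property $f(x_F,0) = x_F$ keep the prediction at $x_F$. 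I would also flag that, unlike the mixed-integer case, this step implicitly uses Assumption~\ref{ass:simplified} through Proposition~\ref{prop:convNStepsSimplified} and Theorem~\ref{th:recFeasSimple}, so it is the convexity of the terminal set (rather than the discreteness of the stored data) that guarantees a feasible input exists to hold the system at the collapsed goal.
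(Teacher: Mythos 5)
Your proposal is correct and takes exactly the route the paper does: invoke Theorem~\ref{th:recFeasSimple} for recursive feasibility, observe via \eqref{eq:deltaTime} that $\mathcal{CS}^{j-1}_{t}$ collapses to $\{x_F\}$ (with $\bar Q^{j-1}_{t}=0$) for $t \geq T^{j-1,*}$, and apply Proposition~\ref{prop:convNStepsSimplified} at $\bar t = T^{j-1,*}-N$ to get $T^j \leq T^{j-1,*}$. The paper's own proof is just a two-line reference to the argument of Theorem~\ref{th:convergence}; your write-up fills in the same steps explicitly, including the tail-invariance point the paper leaves implicit.
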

\begin{proof}
By Theorem~\ref{th:recFeasSimple} we have that Problem~\eqref{eq:RelaxedFTOCP} is feasible at all time $t\geq0$. Therefore, the proof follows as for Theorem~\ref{th:convergence} using Proposition~\ref{prop:convNStepsSimplified}.
\end{proof}

\subsection{Sufficient Condition for the Relaxed LMPC}
In the previous sections, we discussed the properties of the relaxed LMPC strategy in closed-loop with nonlinear systems which satisfy Assumption~\ref{ass:simplified}. Next, we show that the recursive constraint satisfaction and non-decreasing performance properties still hold, if we replace Assumption~\ref{ass:simplified} with the following assumption on the system dynamics and stored data.

\begin{assumption}\label{property:storedDataAndDynamics}
Consider a convex safe set $\mathcal{CS}^j_t$ constructed using the stored closed-loop trajectories ${\bf{x}}^i$ and input sequences ${\bf{u}}^i$ for $i \in \{0,\ldots,j\}$. 
For all $k \in \{1,\ldots, n+1\}$, $x^{(k)} \in \big\{ \bigcup_{i=0}^j \bigcup_{t=0}^{T^i} x_t^i \big\}$ and 
$x \in \text{Conv}\big(\bigcup_{k=1}^{n+1}  x^{(k)}\big)$,
we have that there exists an input $u \in \mathcal{U}$ such that
\begin{equation*}
    \textstyle f(x,u) \in \text{Conv}\Big( \bigcup_{k=1}^{n+1} f(x^{(k)}, u^{(k)})\Big).
\end{equation*}
where $u^{(k)}$ is the stored input applied at the stored state $x^{(k)} \in \big\{ \bigcup_{i=0}^j \bigcup_{t=0}^{T^i} x_t^i \big\}$.

\end{assumption}

\medskip

The above assumption implies that given a state $x$ which can be expresses as the convex combination of $n+1$ stored states $\{x^{(1)},\ldots,x^{(n+1)}\}$ used to construct the convex safe set, there exists a control action $u \in \mathcal{U}$ which keeps the evolution of the system within the convex hull of the successor states $\{f(x^{(1)}, u^{(1)}),\ldots,f(x^{(n+1)}, u^{(n+1)})\}$. We underline that the above assumption is hard to verify in general. In practice, Assumption~\ref{property:storedDataAndDynamics} may be approximately checked using sampling strategies, as shown in the result section. 

Finally, we state the following theorem which summaries the sufficient conditions that guarantee recursive constraint satisfaction, convergence in finite time and non-decreasing performance at each iteration for the relaxed LMPC in closed-loop with the nonlinear system~\eqref{eq:system}.
\begin{theorem}
Consider system \eqref{eq:system} controlled by the relaxed LMPC~\eqref{eq:RelaxedFTOCP} and \eqref{eq:policyLMPCRelaxed}.
Let $\mathcal{CS}^{j}_t$ be the time varying convex safe set at iteration $j$ as defined in \eqref{eq:CS}. Let Assumptions~\ref{ass:feasTrajGiven}, \ref{ass:convexity} and \ref{property:storedDataAndDynamics} hold and assume that $x_0^j = x_S$ and $T^j > N$, $\forall j\geq 0$.
Then, the relaxed LMPC~\eqref{eq:RelaxedFTOCP} and \eqref{eq:policyLMPCRelaxed} satisfies state and input constraints~\eqref{eq:stateInputConstr} at all time. Furthermore, the time $T^j$ at which the closed-loop system~\eqref{eq:system} and~\eqref{eq:policyLMPCRelaxed} converges to $x_F$ is non-increasing with the iteration index,
\begin{equation*}
    T^j \leq T^k,~\forall k\in \{0,\ldots, j-1\}.
\end{equation*}
\end{theorem}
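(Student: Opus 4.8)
The plan is to follow the same two-part structure used in Theorems~\ref{th:recFeasSimple} and~\ref{th:convergenceSimple} --- first recursive feasibility, then non-increasing completion time --- with the single modification that the one-step invariance of the convex safe set must now be derived from Assumption~\ref{property:storedDataAndDynamics} rather than from Assumption~\ref{ass:simplified}. Concretely, everything reduces to re-establishing the claim that powered the proof of Theorem~\ref{th:recFeasSimple}: that for every $x\in\mathcal{CS}^{j-1}_t$ there exists $u\in\mathcal{U}$ with $f(x,u)\in\mathcal{CS}^{j-1}_{t+1}$. Once this invariance is available, the standard MPC shifting argument and Proposition~\ref{prop:convNStepsSimplified} carry the rest unchanged.

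The crux, and the first step I would carry out, is this invariance proof. Fix $x\in\mathcal{CS}^{j-1}_t=\text{Conv}(\mathcal{SS}^{j-1}_t)$. Since $\mathcal{CS}^{j-1}_t\subset\mathbb{R}^n$, Carath\'{e}odory's theorem lets me write $x$ as a convex combination of at most $n+1$ of the stored states defining the safe set, say $x\in\text{Conv}(x^{(1)},\ldots,x^{(n+1)})$ with each $x^{(k)}=x^{i_k}_{m_k}$ and $m_k\geq\delta^{j-1,i_k}_t$. Assumption~\ref{property:storedDataAndDynamics} then supplies an input $u\in\mathcal{U}$ with $f(x,u)\in\text{Conv}\big(f(x^{(1)},u^{(1)}),\ldots,f(x^{(n+1)},u^{(n+1)})\big)$, where $u^{(k)}$ is the stored input applied at $x^{(k)}$. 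The key observation is that each successor $f(x^{(k)},u^{(k)})=x^{i_k}_{m_k+1}$ is itself a stored state, and since the definition~\eqref{eq:deltaTime} yields $\delta^{j-1,i}_{t+1}\leq\delta^{j-1,i}_t+1$ for every $i$, we get $m_k+1\geq\delta^{j-1,i_k}_t+1\geq\delta^{j-1,i_k}_{t+1}$, so that $x^{i_k}_{m_k+1}\in\mathcal{SS}^{j-1}_{t+1}$. Hence all $n+1$ successors lie in $\mathcal{CS}^{j-1}_{t+1}$, and by convexity of that set so does their convex hull; therefore $f(x,u)\in\mathcal{CS}^{j-1}_{t+1}$.

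With invariance established, I would assemble recursive feasibility exactly as in Theorem~\ref{th:recFeasSimple}: given the optimal solution at time $t$ with terminal state $x^{j,*}_{t+N|t}\in\mathcal{CS}^{j-1}_{t+N}$, the invariance property yields a $u\in\mathcal{U}$ steering it into $\mathcal{CS}^{j-1}_{t+N+1}$, which produces a feasible candidate at $t+1$ (the intermediate states inherit $\mathcal{X},\mathcal{U}$ feasibility, and $\mathcal{CS}^{j-1}_{t+N+1}\subset\mathcal{X}$ follows from convexity of $\mathcal{X}$ in Assumption~\ref{ass:convexity}). For the base case at $t=0$ I would use the first $N$ steps of the minimum-time stored trajectory $i^*$ (with $T^{i^*}=T^{j-1,*}$), whose $N$th state lies in $\mathcal{SS}^{j-1}_N\subset\mathcal{CS}^{j-1}_N$ because $\delta^{j-1,i^*}_N=N$ when $T^{j-1,*}>N$. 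Finite-time convergence with non-increasing $T^j$ then replays Theorem~\ref{th:convergenceSimple}: at $\bar t=T^{j-1,*}-N$ one has $\mathcal{CS}^{j-1}_{\bar t+N}=\mathcal{CS}^{j-1}_{T^{j-1,*}}=x_F$ and $\bar Q^{j-1}_{T^{j-1,*}}=0$, so Proposition~\ref{prop:convNStepsSimplified} forces convergence within $T^{j-1,*}$ steps, giving $T^j\leq T^{j-1,*}\leq T^k$ for all $k\leq j-1$.

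The only genuinely new difficulty beyond Theorems~\ref{th:recFeasSimple}--\ref{th:convergenceSimple} is the dimension-counting step: recognizing that Carath\'{e}odory's theorem is precisely what bridges an arbitrary convex combination of stored states --- which may involve many points --- and the $n+1$-point form of Assumption~\ref{property:storedDataAndDynamics}, and then checking that the $\delta^{j-1,i}_t$ index shift keeps every successor inside the next convex safe set. Everything downstream is a routine repetition of the earlier arguments.
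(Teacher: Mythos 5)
Your proposal is correct and follows essentially the same route as the paper's proof: it uses Assumption~\ref{property:storedDataAndDynamics} to obtain an admissible input keeping the (shifted) terminal state inside the next convex safe set, then runs the standard shifted-candidate argument for recursive feasibility and defers convergence and non-increasing $T^j$ to the arguments of Theorems~\ref{th:recFeasSimple} and~\ref{th:convergenceSimple} via Proposition~\ref{prop:convNStepsSimplified}. The only difference is that you make explicit the Carath\'{e}odory reduction to $n+1$ stored states and the index check $\delta^{j-1,i}_{t+1}\leq\delta^{j-1,i}_t+1$, details the paper's terser proof leaves implicit.
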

\begin{proof}
We assume that at time $t$ the relaxed LMPC~\eqref{eq:RelaxedFTOCP} and \eqref{eq:policyLMPCRelaxed} is feasible, let~\eqref{eq:optSolution} be the optimal solution. As Assumption~\ref{property:storedDataAndDynamics} holds, we have that there exists $u \in \mathcal{U}$ such that
\begin{equation*}
\begin{aligned}
    &[x_{t+1|t}^{j,*}, \ldots, x_{t+N|t}^{j,*},f(x_{t+N|t}^{j,*}, u) \in \mathcal{CS}^{j}_{t+1}] \\
    &[u_{t+1|t}^{j,*}, \ldots, u_{t+N-1|t}^{j,*},u \in \mathcal{U}],
\end{aligned}
\end{equation*}
satisfy state and input constraints~\eqref{eq:stateInputConstr}, and therefore the relaxed LMPC~\eqref{eq:RelaxedFTOCP} and \eqref{eq:policyLMPCRelaxed} is feasible at time $t+1$. The rest of the proof follows as in Theorems~\ref{th:recFeasSimple} and \ref{th:convergenceSimple}.
\end{proof}

\section{Data Reduction}\label{sec:dataReduction}
In this section, we show that the proposed LMPC can be implemented using a subset of the time varying safe set from~\eqref{eq:SS}. 
In particular, we show that the controller may be implemented using the last $l$ iterations and $P$ data points per iteration.

\subsection{Safe Subset}
We define the time varying safe subset from iteration $l$ to iteration $j$ and for $P$ data points as
\begin{equation}\label{eq:SS^j,l}
    \mathcal{SS}^{j,l}_{t,P} = \bigcup_{i = l}^j \bigcup_{k = \delta^{j,i}_t}^{\delta^{j,i}_t+P} x_{k}^i,
\end{equation}
where $\delta^{j,i}_t$ is defined in~\eqref{eq:deltaTime}. Furthermore, in the above definition we set $x_{k}^i = x_F$ for all $k>T^i$ and $i\in \{l,\ldots, j\}$. 
A representation of the time varying safe subset for a two-dimensional system is shown in Figure~\ref{fig:localSSillustration}. 
Compare the safe subset $\mathcal{SS}^{j,l}_{t,P}$ with the safe set $\mathcal{SS}^{j}_{t}$ from~\eqref{eq:SS}. We notice that,  $\mathcal{SS}^{j,l}_{t,P}$ is contained within $\mathcal{SS}^{j}_{t}$. Therefore, at time $t$ the safe subset
collects the stored states from which system~\eqref{eq:system} can reach the terminal state $x_F$ in at most $(T^{j,*}-t)$ time steps.
Finally, by definition, if a state $x_t^i$ belongs to $\mathcal{SS}^{j,l}_{t,P}$, then there exists a feasible control action $u_t^i \in \mathcal{U}$ which keeps the evolution of the nonlinear system~\eqref{eq:system} within the time varying safe set at the next time step $t+1$. 
This property allows us to replace $\mathcal{SS}^{j}_{t}$ with $\mathcal{SS}^{j,l}_{t,P}$ in the design of the LMPC~\eqref{eq:FTOCP} and~\eqref{eq:policyLMPC}, without loosing the recursive constraint satisfaction property from Theorem~\ref{th:recFeas}.

Finally, at each time $t$ we define the local convex safe subset as the convex hull of $\mathcal{SS}^{j,l}_{t,P}$ from~\eqref{eq:SS^j,l},
\begin{equation}\label{eq:CS^j,l}
\begin{aligned}
\mathcal{CS}^{j,l}_{t,P} &= \text{Conv}\big( \mathcal{SS}^{j,l}_{t,P} \big).
\end{aligned}
\end{equation}
We underline that $\mathcal{CS}^{j,l}_{t,P} \subseteq \mathcal{CS}^{j}_{t}$ and that the relaxed LMPC from Section~\ref{sec:LMPC_Nonlinear} may be implemented replacing the convex safe set~\eqref{eq:CS} with the convex safe subset~\eqref{eq:CS^j,l}. 

\begin{figure}[h!]
	\centering \includegraphics[width=0.5\columnwidth]{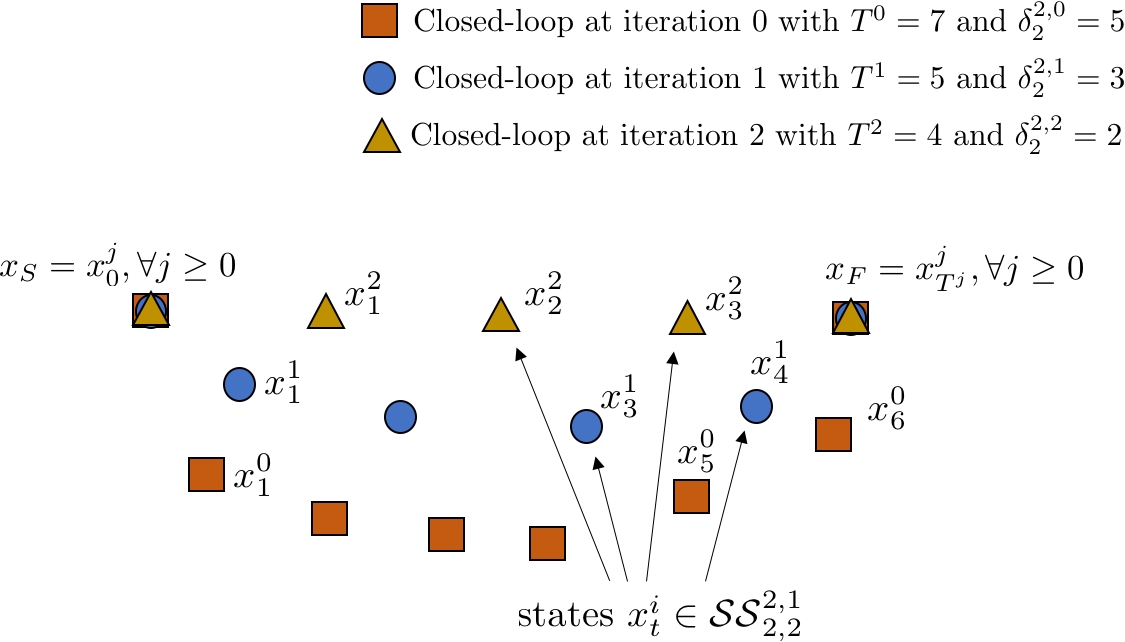}
	\caption{Representation of the time varying safe subset $\mathcal{SS}^{2,1}_{2,2}$. 
	Compared with the safe set from Figure~\ref{fig:SSillustration}, the safe subset is constructed using a subset of data points. In particular, we have that $\mathcal{SS}^{2}_{2} = \{x_2^2, x_3^2, x_4^2,x_3^1,x_4^1,x_5^1,x_5^0,x_6^0,x_7^0 \}$ and $\mathcal{SS}^{2,1}_{2,2} = \{x_2^2, x_3^2,x_3^1,x_4^1 \}$.}\label{fig:localSSillustration}
\end{figure}

\subsection{Q-function}
In the section, we construct the $Q$-function which assigns the cost-to-go to the states contained in the time varying safe subset from~\eqref{eq:SS^j,l}. In particular, we introduce the function $Q^{j,l}_{t,P}(\cdot)$, defined over the safe subset $\mathcal{SS}^{j,l}_{t,P}$, as
\begin{equation}\label{eq:Qfunction^j,l}
\begin{aligned}
Q^{j,l}_{t,P}(x) = \min\limits_{\substack{i \in \{l, \ldots, j\} \\ t \in \{\delta_t^{j,i}, \ldots, \delta_t^{j,i}+P\} } } & \quad J^i_{t\rightarrow T^i}(x_t^i)\\
\text{s.t. }\quad  \quad & \quad x = x_t^i \in \mathcal{SS}^{j,l}_{t,P}.
\end{aligned}
\end{equation}
Compare the above function $Q^{j,l}_{t,P}$ with $Q^{j}_{t}$ from~\eqref{eq:Qfunction}. We notice that, the domain of $Q^{j,l}_{t,P}$ is the safe subset $\mathcal{SS}^{j,l}_{t,P}$ and the domain of the $Q^{j}_{t}$ is the safe set $\mathcal{SS}^{j}_{t} \supseteq \mathcal{SS}^{j,l}_{t,P}$. Moreover, we have that
\begin{equation*}
    \forall x \in \mathcal{SS}^{j,l}_{t,P}, Q^{j,l}_{t,P}(x) = Q^{j}_{t}(x).
\end{equation*}
The above property allows us to replace $Q^{j}_{t}$ with $Q^{j,l}_{t,P}$ in the design of the LMPC policy~\eqref{eq:policyLMPC}, without loosing the finite time convergence and non-decreasing performance properties.

Furthermore, we define the convex $Q$-function $\bar{Q}^{j,l}_{t,P}$ from iteration $l$ to iteration $j$ and for $P$ data points as
\begin{equation}\label{eq:cvxQfunction^j,l}
\begin{aligned}
\bar{Q}^{j,l}_{t,P}(x) = \min\limits_{ [\lambda_{\delta_t^{j,i}}^0, \ldots, \lambda_{\delta_t^{j,i}+P}^j] \geq 0 } & \quad \sum_{i = 0}^j \sum_{k = \delta_t^{j,i}}^{\delta_t^{j,i}+P} \lambda_{k}^i J^i_{k\rightarrow \delta_t^{j,i}+P}(x_{k}^{i})\\
\text{s.t. }\quad\quad & \quad \sum_{i = 0}^j \sum_{k = \delta_t^{j,i}}^{\delta_t^{j,i}+P} \lambda_{k}^i x_k^i = x \\
& \quad \sum_{i = 0}^j \sum_{k = \delta_t^{j,i}}^{\delta_t^{j,i}+P} \lambda_{k}^i = 1.
\end{aligned}
\end{equation}
where $\delta^{j,i}_t$ is defined in~\eqref{eq:deltaTime}. 
The above convex $Q$-function $\bar{Q}^{j}_t(\cdot)$ is simply a piecewise-affine interpolation of the $Q$-function from~\eqref{eq:Qfunction^j,l} over the convex safe subset, as shown in Figure~\ref{fig:localConvexQfunctionillustration}. In the result section we will show that $\bar{Q}^{j,l}_{t,P}$ can be used in the relaxed LMPC design instead of $\bar{Q}^{j}_{t}$. 

\begin{figure}[h!]
	\centering \includegraphics[width=0.5\columnwidth]{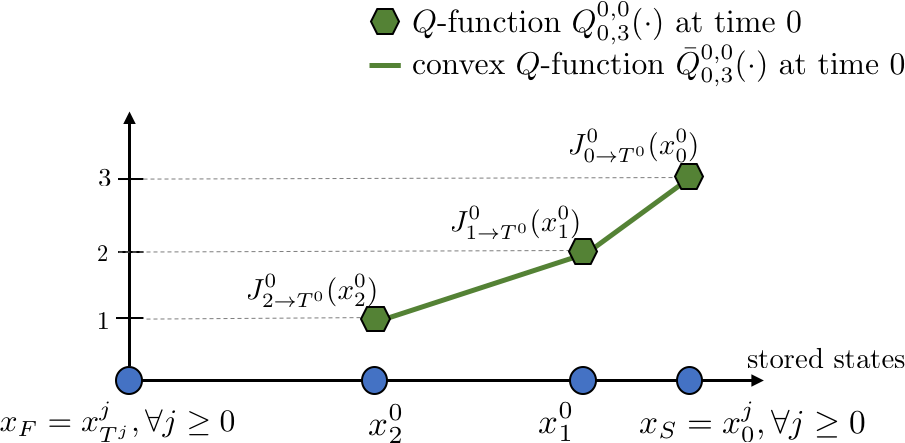}
	\caption{Representation of the $Q$-function $Q^{0,0}_{0,3}(\cdot)$ and convex $Q$-function $\bar Q^{0,0}_{0,3}(\cdot)$. We notice that the $Q$-function $Q^{0,0}_{0,3}(\cdot)$ is defined over a set of discrete data points, whereas the convex $Q$-function $\bar Q^{0,0}_{0,3}(\cdot)$ is defined over the convex safe set.}\label{fig:localConvexQfunctionillustration}
\end{figure}


\section{Beyond Iterative Tasks and Deterministic Systems}
In this section, we describe how the proposed strategy can be used when the initial condition is perturbed at each iteration and when the system dynamics are subject to bounded disturbances. In particular, we use backward reachablity analysis to characterize the region of attraction of the controller. Furthermore, we use standard rigid tube MPC strategies to extend the control design to uncertain systems.

\subsection{Perturbed Initial Condition}\label{sec:perturbed}
In this section, we assume that the initial condition $x_0^j$ may be perturbed at each iteration. First, we introduce the one-step controllable set and the $k$-steps controllable set to a set $\mathcal{S}$ from~\cite{BorrelliBemporadMorari_book}.

\begin{definition}[One-Step Controllable Set]
For the system~\eqref{eq:system} we denote the \emph{one-step controllable set} to  the set $\mathcal{S}$ as
\end{definition}
\begin{equation}\label{eq:Pred}
\begin{aligned}
\mathcal{K}_1(\mathcal{S}) = \Pre(\mathcal{S}) \cap\mathcal{X}.
\end{aligned}
\end{equation}
where
\begin{equation}\label{eq:Pred0}
\begin{aligned}
\Pre(\mathcal{S})\triangleq \{x\in \mathbb{R}^n~:~\exists u\in\mathcal{U}\text{ s.t. } f(x,u)\in \mathcal{S}\}.
\end{aligned}
\end{equation}
$\mathcal{K}_1(\mathcal{S})$ is the set of states which can be driven to the target set $\mathcal{S}$ in one time step while satisfying input and state constraints. $N$-step controllable sets are defined by iterating $\mathcal{K}_1(\mathcal{S})$  computations.

\begin{definition}[$N$-Step Controllable Set $\tKN(\mathcal{S})$] \index{$N$-Step Controllable Set}
\label{def:Ncset}
For a given target set $\mathcal{S}\subseteq \mathcal{X}$, the $N$-step controllable set $\tKN(\mathcal{S})$ of system~\eqref{eq:system} subject to constraints~\eqref{eq:stateInputConstr} is defined recursively as:
\begin{equation}
\label{eq:Nsteprecf}
\mathcal{K}_j(\mathcal{S}) \triangleq \Pre(\mathcal{K}_{j-1}(\mathcal{S}))\cap\mathcal{X},~~ \mathcal{K}_{0}(\mathcal{S})=\mathcal{S},~~~~ j\in\{1,\ldots,N\}.
\end{equation}
\end{definition}
From Definition~\ref{def:Ncset}, all states $x_0$ of the system \eqref{eq:system} belonging to the $N$-step controllable set $\tKN(\mathcal{S})$ can be driven, by a suitable control sequence, to the target set $\mathcal{S}$ in $N$ steps,
while satisfying input and state constraints. Therefore, if the initial state $x_0^j \neq x_S$ belongs to the $N$-step controllable set $\tKN(\mathcal{SS}^{j-1}_{N})$, then we have that Problem~\eqref{eq:FTOCP} is feasible and the LMPC properties hold for the system initialized at $x_0^j \neq x_S$, as stated by the following theorem.

\begin{theorem}\label{th:diff_IC}
Consider $x_0^j \neq x_S$ and system \eqref{eq:system} controlled by the \mbox{LMPC} \eqref{eq:FTOCP} and \eqref{eq:policyLMPC}.
Let $\mathcal{SS}^{j}_t$ be the time varying safe set at iteration $j$ as defined in \eqref{eq:SS}. Let Assumption~\ref{ass:feasTrajGiven} hold and assume that $x_0^j \in \tKN(\mathcal{SS}_{N}^{j-1}) ~\forall j\geq 1$. Then at every iteration $j\geq1$ the LMPC~\eqref{eq:FTOCP} and \eqref{eq:policyLMPC} is feasible for all $t \geq 0$ when \eqref{eq:policyLMPC} is applied to system~\eqref{eq:system}. Furthermore, the time $T^j$ at which the closed-loop system~\eqref{eq:system} and~\eqref{eq:policyLMPC} converges to $x_F$ is non-increasing with the iteration index,
\begin{equation*}
    T^j \leq T^k,~\forall k \in \{0,\ldots, j-1\}.
\end{equation*}
\end{theorem}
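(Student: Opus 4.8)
The plan is to split the statement into its two claims—recursive feasibility for all $t \geq 0$ and the non-increasing convergence time—and to reduce each one to the arguments already established for the unperturbed case in Theorems~\ref{th:recFeas} and~\ref{th:convergence}. The only place where the perturbed initial condition $x_0^j \neq x_S$ enters is the base case of the feasibility induction, so that is where all the new reasoning sits; everything else carries over essentially verbatim, since the safe-set construction~\eqref{eq:SS}--\eqref{eq:deltaTime} and its one-step property are agnostic to where the stored trajectories begin.

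For feasibility, I would first recall that the inductive step of Theorem~\ref{th:recFeas} does not use the initial condition at all: given an optimal solution at time $t$ with $x_{t+N|t}^{j,*} = x_{k^*}^{i^*}$, the shifted-and-appended candidate~\eqref{eq:candidateSolution} is feasible at time $t+1$ because $x_{k^*}^{i^*} \in \mathcal{SS}_{t+N}^{j-1}$ forces $x_{k^*+1}^{i^*} \in \mathcal{SS}_{t+1+N}^{j-1}$ through the one-step property of the time varying safe set noted after~\eqref{eq:SS}. Hence this step applies unchanged. The genuinely new ingredient is the base case: I would invoke Definition~\ref{def:Ncset}, which states that every $x_0^j \in \tKN(\mathcal{SS}_N^{j-1})$ can be driven to the target set $\mathcal{SS}_N^{j-1}$ in exactly $N$ steps by a suitable input sequence while respecting the constraints~\eqref{eq:stateInputConstr}. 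That input sequence, together with its induced state trajectory, is precisely a feasible point of Problem~\eqref{eq:FTOCP} at $t=0$ of iteration $j$, since it meets the dynamics, the state and input constraints, and the terminal constraint $x_{N|0}^j \in \mathcal{SS}_N^{j-1}$. This replaces the role played by the previous trajectory~\eqref{eq:feasTrajTime0} in Theorem~\ref{th:recFeas}, and feasibility for all $t \geq 0$ then follows by the same induction.

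For the non-increasing convergence time, I would observe that the argument of Theorem~\ref{th:convergence} relies only on feasibility at all times, now established, and on the structure of the safe set near $x_F$, neither of which references the starting point. Setting $T^{j-1,*} = \min_{k \in \{0,\ldots,j-1\}} T^k$ and $\bar t = T^{j-1,*} - N$, definitions~\eqref{eq:SS}--\eqref{eq:deltaTime} give $\delta_{\bar t + N}^{j-1,i} = T^i$ for every $i$, so $\mathcal{SS}_{\bar t + N}^{j-1} = \mathcal{SS}_{T^{j-1,*}}^{j-1} = x_F$ and the terminal cost vanishes there. Proposition~\ref{prop:convNSteps} then yields convergence in at most $\bar t + N = T^{j-1,*}$ steps, that is $T^j \leq T^{j-1,*} \leq T^k$ for every $k \in \{0,\ldots, j-1\}$.

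The main obstacle—indeed the only non-mechanical point—is recognizing that membership in $\tKN(\mathcal{SS}_N^{j-1})$ is by definition equivalent to the existence of a feasible $N$-step candidate reaching the terminal constraint of~\eqref{eq:FTOCP}, which is exactly what the base case of the feasibility induction demands. Once this identification is made, both claims reduce to the previously proven results, with the convergence half requiring no modification whatsoever.
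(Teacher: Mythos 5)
Your proposal is correct and follows essentially the same route as the paper: feasibility at $t=0$ from the definition of the $N$-step controllable set $\tKN(\mathcal{SS}_{N}^{j-1})$, after which both recursive feasibility and the bound $T^j \leq T^{j-1,*}$ are inherited from the arguments of Theorems~\ref{th:recFeas} and~\ref{th:convergence} (via Proposition~\ref{prop:convNSteps}). The only cosmetic difference is that the paper additionally records the terminal-cost bound $Q_N^{j-1}(x)\leq T^{j-1,*}-N$, hence $J_{0\rightarrow N}^{\scalebox{0.4}{LMPC},j}(x_0^j)\leq T^{j-1,*}$, before deferring to those theorems, while you obtain the same performance bound directly through the safe-set collapse at time $T^{j-1,*}$.
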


\begin{proof}
We notice that by assumption $x_0^j \in \tKN(\mathcal{SS}_{t+N}^{j-1}) ~\forall j\geq 1$. Therefore, by definition of $k$-steps reachable set the LMPC~\eqref{eq:FTOCP} and \eqref{eq:policyLMPC} is feasible at time $t=0$. We notice that, as discussed in Section~\ref{sec:SSandVfun}, $\forall x \in \mathcal{SS}_{N}^{j-1}$ we have that $Q^{j-1}_N(x) \leq T^{j-1,*}-N$. Therefore, the LMPC optimal cost at time $t=0$ is $J_{0\rightarrow N}^{\scalebox{0.4}{LMPC},j}(x_0^j) \leq T^{j-1,*}$. The rest of the proof follows as in Theorems~\ref{th:recFeas} and~\ref{th:convergence}.
\end{proof}

Finally, we underlined that the guarantees from the above theorem hold also for the relaxed LMPC from Section~\ref{sec:LMPC_Nonlinear}, when $\tKN(\mathcal{SS}_{N}^j)$ is replaced with $\tKN(\mathcal{CS}_{N}^j)$.

\subsection{Uncertain Systems}
All guarantees provided in this paper hold for deterministic models without uncertainty. In this section, we briefly show how the proposed strategies can be combined with standard rigid tube MPC methodologies to design a robust LMPC for uncertain systems. We consider the following nonlinear system
\begin{equation}\label{eq:uncertainsSys}
    x_{k+1}^j = f_w(x_t^j, u_t^j) + w_t^j,
\end{equation}
where at time $t$ of iteration $j$ the state $x_t^j \in \mathbb{R}^n$, the input $u_t^j \in \mathbb{R}^d$ and the disturbance $w_t^j \in \mathcal{W}$. Furthermore, we introduce the nominal state $\bar x \in \mathbb{R}^n$, the error state $e = x-\bar x \in \mathbb{R}^n$ and the associated dynamics 
\begin{equation}\label{eq:decoulpledSys}
\begin{aligned}
    \bar x_{t+1}^j &= f_w(\bar x_t^j, \bar u_t^j) \\
    e_{t+1}^j &= f_w(x_t^j, u_t^j) + w_t^j - f_w(\bar x_t^j, \bar u_t^j),
\end{aligned}
\end{equation}
where $\bar u_t^j \in \mathbb{R}^d$ represents the nominal input. 
The above decomposition has been used in several robust MPC and motion planning strategies~\cite{singh2017robust, singh2018robust, yin2019optimization, yu2013tube, herbert2017fastrack}. In these approaches, the key idea is to compute a robust control invariant set $\mathcal{E}$ for the error dynamic and then use the nominal model for planning. The robust invariant set and the associated control policy for the error dynamics may be computed using sum of square programming~\cite{singh2017robust,
singh2018robust, yin2019optimization}, Lipschitz properties of the nonlinear dynamics~\cite{yu2013tube} or Hamilton-Jacobi reachability analysis~\cite{herbert2017fastrack}. In the following we assume that a robust invariant set $\mathcal{E}$ for the error dynamics is given.

\begin{assumption}\label{ass:invariance}
Consider the uncertain system~\eqref{eq:uncertainsSys} and the constraint sets~\eqref{eq:stateInputConstr}.
For the set $\mathcal{E} \subset \mathcal{X}$ and the policy $\kappa_w:\mathbb{R}^n\times\mathbb{R}^n \rightarrow \mathcal{V} \subset \mathcal{U}$ we have that
\begin{equation*}
    \forall x \in \mathcal{X}, \forall e =x - \bar x \in \mathcal{E},  f_w(x, \bar u + \kappa_w(x,\bar x)) + w - f_w(\bar x, \bar u) \in \mathcal{E}, \forall w \in \mathcal{W}, \forall \bar u \in \mathcal{U} \ominus \mathcal{V}
\end{equation*}
where $\mathcal{U} \ominus \mathcal{V}$ denotes the Pontryagin difference between the sets $\mathcal{U}$ and $\mathcal{V}$.
\end{assumption}

Given $j$ stored trajectories for the nominal system from~\eqref{eq:decoulpledSys}, 
\begin{equation}\label{eq:storedNominal}
{\bar {\bf{x}}}^i = [\bar x_0^i,\ldots,~\bar x_{T^i}^i] \text{ for }i\in\{0,\ldots,j\},
\end{equation}
we define the nominal time varying safe set at iteration $j$ as
\begin{equation}\label{eq:nominalSS}
    \mathcal{\bar{SS}}^{j}_t = \bigcup_{i = 0}^j \bigcup_{k = \delta^{j,i}_t}^{T^i} \bar x_{k}^i,
\end{equation}
where $T^{j,*} = \min_{k \in \{ 0, \ldots, j \}}  T^k $
and $\delta^{j,i}_t $ is defined as in~\eqref{eq:deltaTime}. Furthermore, we introduce the nominal $Q$-function
\begin{equation}\label{eq:nominalQfunction}
\begin{aligned}
\bar Q^{j}_t(x) = \min\limits_{\substack{i \in \{0, \ldots, j\} \\ k \in \{\delta_t^{j,i}, \ldots, T^i\} } } & \quad J^i_{k\rightarrow T^i}(\bar x_k^i)\\
\text{s.t. } \quad~ & \quad x = \bar x_k^i \in \mathcal{\bar{SS}}^{j}_t.
\end{aligned}
\end{equation}
where $J_{t\rightarrow T^j}^j(\bar x_t^j) = ~ \sum_{k=t}^{T^j} \mathds{1}_{x_F}(\bar x_k^j)$. The above nominal $Q$-function maps each state $\bar x_t^j$ of the nominal safe set to the minimum cost-to-go along the stored nominal trajectories~\eqref{eq:storedNominal}.

\begin{remark}
The nominal safe set $\mathcal{\bar{SS}}_t^j$~\eqref{eq:nominalSS} and nominal $Q$-function $\bar Q^{j}_t(x)$~\eqref{eq:nominalQfunction} are defined similarly to the safe set and $Q$-function described in Section~\ref{sec:SSandVfun}. The difference between the deterministic case and the uncertain one is that the nominal safe set and $Q$-function are constructed using the nominal stored trajectories from~\eqref{eq:storedNominal}.
\end{remark}

Next, we show how to leverage the nominal safe set and $Q$-function to design a robust LMPC, which iteratively steers the uncertain system~\eqref{eq:uncertainsSys} from the starting state $x_S$ to a goal set $\mathcal{G} = \{x_F\} \oplus \mathcal{E}$. Consider the following optimal control problem,
\begin{equation}\label{eq:robustFTOCP}
	\begin{aligned}
		J_{t\rightarrow t+N}^{\scalebox{0.4}{RLMPC},j}(x_t^j, \bar x_{t-1}^j) = \min_{\bar x_{t|t}^j, {\bf{U}}_t^j } \quad & \bigg[  \sum_{k=t}^{t+N-1}  \mathds{1}_{x_F}(\bar x_{k|t}^j) + \bar{Q}_{t+N}^{j-1}(\bar x_{t+N|t}^j)\bigg] \\
		\text{s.t.}~\quad 
		&\bar x_{k+1|t}^j=f_w( \bar x_{k|t}^j, \bar u_{k|t}^j ), \forall k = t, \cdots, t+N-1 \\
		&\bar x_{k|t}^j \in \mathcal{X}\ominus\mathcal{E}, \bar u_{k|t}^j \in \mathcal{U}\ominus\mathcal{V}, \forall k = t, \cdots, t+N-1\\ 
		& \bar x_{t+N|t}^j \in ~\mathcal{\bar{SS}}_{t+N}^{j-1}\\
		& x_t^j - \bar x_{t|t}^j \in \mathcal{E} \\
		&\begin{rcases}
  \bar x_{t|t}^j = f_w(\bar x_{t-1}^j, \bar u_{t-1|t}^j) \in \mathcal{X} \ominus \mathcal{E}\\ \bar u_{t-1|t}^j \in \mathcal{U}\ominus\mathcal{V}
\end{rcases}
\text{If } t \geq 1
	\end{aligned}
\end{equation}
where ${\bf{U}}_t^j = [u_{t-1|t}, \ldots, u_{t+N-1|t}]$. The above finite time optimal control problem finds an initial state $\bar x_{t|t}^j$ and plans a trajectory which steers the nominal model to the nominal time varying safe set $\mathcal{\bar{SS}}_{t+N}^{j-1}$ from~\eqref{eq:nominalSS}. Notice that the state and input constraint sets in~\eqref{eq:robustFTOCP} are tightened to account for the model mismatch.  Let ${\bf{U}}_t^{j,*}$ and $\bar x_{t|t}^{j,*}$ be the optimal solution to the above finite time optimal control problem, then we apply to system~\eqref{eq:uncertainsSys}
\begin{equation}\label{eq:robustPolicy}
    u_t^j = \pi^{\scalebox{0.5}{RLMPC},j}_t(x_t^j, \bar x_{t-1}^j) = \bar u_t^{j,*} + \kappa_w(x_t^j, \bar x_{t|t}^{j,*}).
\end{equation}
Notice that the robust LMPC problem~\eqref{eq:robustFTOCP} differs from a standard fixed tube robust MPC for the following reasons: $(i)$ the sequence of predicted inputs has $N+1$ terms, $(ii)$ the nominal state $\bar x_{t-1}^j$ at the previous time step is used to constrain the nominal state $\bar x_{t|t}^j$ and $(iii)$ the last two constraints in problem~\eqref{eq:robustFTOCP} are removed at time $t=0$. These design choices guarantee that the nominal state and input trajectories
\begin{equation}\label{eq:nomStateInputSeq}
{\bar {\bf{x}}}^j = [\bar x_0^j,\ldots,~\bar x_{T^j}^j] \text{ where } \bar x_t^j = \bar x_{t|t}^{j,*}~\forall t\geq 0 \text{ and  } {\bar {\bf{u}}}^j = [\bar u_0^j,\ldots,~\bar u_{T^j}^j]  \text{ where } u_{t-1} = \bar u_{t-1|t}^{j,*}~\forall t\geq 1,
\end{equation}
are feasible for the nominal system~\eqref{eq:uncertainsSys}. Therefore, the above nominal trajectory could be used to update the nominal safe set~\eqref{eq:nominalSS} and nominal $Q$-function~\eqref{eq:nominalQfunction}.
It is important to underline that the nominal state trajectory in~\eqref{eq:nomStateInputSeq} is computed by the robust LMPC~\eqref{eq:robustFTOCP} and~\eqref{eq:robustPolicy} smoothing out the effect of the disturbance on the nominal dynamics. Indeed the controller can pick the nominal state $\bar x_t^j = \bar x_{t|t}^{j,*}$ as long as $e_t^j = x_t^j - \bar x_t^j \in \mathcal{E}$ and the nominal trajectory is feasible for some input $\bar u_{t-1}^j \in \mathcal{U} \ominus \mathcal{V}$.

In what follows, we show that the robust LMPC~\eqref{eq:robustFTOCP} and~\eqref{eq:robustPolicy} guarantees robust constraint satisfaction and non-decreasing performance for the closed-loop uncertain system~\eqref{eq:uncertainsSys} and \eqref{eq:robustPolicy}.

\begin{assumption}\label{ass:nominalfeasTrajGiven}
At iteration $j=0$, we are given the nominal closed-loop trajectory and associated input sequence
\begin{equation*}
\begin{aligned}
 [\bar x_0^0,\ldots,~\bar x_{T^0}^0]\text{ and } [\bar u_0^0,\ldots,~\bar u_{ T^0}^0], 
\end{aligned}
\end{equation*}
such that $x_t^0 \in \mathcal{X}\ominus\mathcal{E}$ and $\bar u_{t}^0 \in \mathcal{U} \ominus \mathcal{V}$, for all $t \in \{0, \ldots, T^0\}$. Furthermore, we have that $x_0^0 = \bar x_0^0 = x_S$ and $\bar x_{T^0}^0 = \bar x_F$, where $\bar x_F$ is an unforced equilibrium point for the nominal system~\eqref{eq:decoulpledSys}.
\end{assumption}

\begin{theorem}
Consider the uncertain system \eqref{eq:uncertainsSys} controlled by the robust \mbox{LMPC} \eqref{eq:robustFTOCP} and \eqref{eq:robustPolicy}.
Let $\mathcal{\bar{SS}}^{j}_t$ be the time varying safe set at iteration $j$ defined as in~\eqref{eq:nominalSS}. Let Assumptions~\ref{ass:invariance}-\ref{ass:nominalfeasTrajGiven} hold and $x_0^j \in \{x_s\}\oplus \mathcal{E},~\forall j\geq 1$. Then at every iteration $j\geq1$ the robust LMPC~\eqref{eq:robustFTOCP} and \eqref{eq:robustPolicy} is feasible for all $t \geq 0$ when \eqref{eq:robustPolicy} is applied to system~\eqref{eq:uncertainsSys} and state and input constraints~\eqref{eq:stateInputConstr} are robustly satisfied. Furthermore, the time $\bar T^j$ at which the  nominal state $\bar x_{T^j}^j$ equals the goal state $\bar x_F$ is non-increasing with the iteration index,
\begin{equation*}
    \bar T^j \leq \bar T^k,~\forall k \in \{0,\ldots, j-1\}.
\end{equation*}
Finally, at time $\bar T^j$ the uncertain system reaches the goal set $\mathcal{G} = \{x_F\} \oplus \mathcal{E}$.
\end{theorem}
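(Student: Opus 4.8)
The plan is to combine the recursive-feasibility argument of rigid-tube MPC (which keeps the error $e_t^j = x_t^j - \bar x_{t|t}^{j,*}$ inside $\mathcal{E}$ via Assumption~\ref{ass:invariance}) with the LMPC terminal-ingredient argument of Theorems~\ref{th:recFeas} and~\ref{th:convergence} applied to the \emph{nominal} trajectory. The key observation is that the nominal state evolves deterministically, $\bar x_{t+1}^j = f_w(\bar x_t^j, \bar u_t^j)$, so the robust LMPC~\eqref{eq:robustFTOCP} effectively runs the deterministic LMPC of Section~\ref{sec:LMPC_MIP} on the nominal model, with $\mathcal{\bar{SS}}_{t}^{j-1}$ from~\eqref{eq:nominalSS} and $\bar Q_{t}^{j-1}$ from~\eqref{eq:nominalQfunction} playing the role of terminal set and terminal cost.

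First I would establish recursive feasibility by a shift-and-append construction. Assuming~\eqref{eq:robustFTOCP} is feasible at time $t$ with optimal nominal terminal state $\bar x_{t+N|t}^{j,*} = \bar x_{k^*}^{i^*} \in \mathcal{\bar{SS}}_{t+N}^{j-1}$, I would build a candidate at $t+1$: choose the first candidate input $\bar u_{t|t+1}^j = \bar u_{t|t}^{j,*}$, so that $\bar x_{t+1|t+1}^j = f_w(\bar x_{t|t}^{j,*}, \bar u_{t|t}^{j,*}) = \bar x_{t+1|t}^{j,*}$ satisfies the committed-state constraint; keep the shifted nominal inputs $\bar u_{t+1|t}^{j,*}, \ldots, \bar u_{t+N-1|t}^{j,*}$; and append the stored nominal input $\bar u_{k^*}^{i^*}$, which by the definition of the nominal safe set~\eqref{eq:nominalSS} steers $\bar x_{k^*}^{i^*}$ to $\bar x_{k^*+1}^{i^*} \in \mathcal{\bar{SS}}_{t+N+1}^{j-1}$. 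The tightened constraints $\bar x \in \mathcal{X}\ominus\mathcal{E}$, $\bar u \in \mathcal{U}\ominus\mathcal{V}$ hold for this candidate because they held for the optimizer at $t$ and for the stored trajectory. The only nontrivial constraint is the tube constraint $x_{t+1}^j - \bar x_{t+1|t+1}^j \in \mathcal{E}$: using the policy~\eqref{eq:robustPolicy}, $u_t^j = \bar u_{t|t}^{j,*} + \kappa_w(x_t^j, \bar x_{t|t}^{j,*})$, together with $e_t^j \in \mathcal{E}$ (enforced by the tube constraint at time $t$), Assumption~\ref{ass:invariance} yields $e_{t+1}^j = f_w(x_t^j, u_t^j) + w_t^j - f_w(\bar x_{t|t}^{j,*}, \bar u_{t|t}^{j,*}) \in \mathcal{E}$, and since $\bar x_{t+1|t+1}^j = \bar x_{t+1|t}^{j,*}$ this is precisely $x_{t+1}^j - \bar x_{t+1|t+1}^j \in \mathcal{E}$. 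For the base case $t=0$ (where the last two constraints of~\eqref{eq:robustFTOCP} are dropped) I would set $\bar x_{0|0}^j = x_S$ and use the first $N$ steps of the stored nominal trajectory of iteration $j-1$, feasible by Assumption~\ref{ass:nominalfeasTrajGiven} for $j=1$ and by~\eqref{eq:nomStateInputSeq} for $j>1$; the tube constraint holds because $x_0^j \in \{x_S\}\oplus\mathcal{E}$. Induction over $t$ and $j$ then delivers feasibility for all $t\geq 0$, $j\geq 1$.

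Robust constraint satisfaction follows immediately, since feasibility of~\eqref{eq:robustFTOCP} forces $e_t^j \in \mathcal{E}$ and the tightened constraints at every $t$, so $x_t^j = \bar x_{t|t}^{j,*} + e_t^j \in (\mathcal{X}\ominus\mathcal{E})\oplus\mathcal{E} \subseteq \mathcal{X}$ and $u_t^j = \bar u_{t|t}^{j,*} + \kappa_w \in (\mathcal{U}\ominus\mathcal{V})\oplus\mathcal{V} \subseteq \mathcal{U}$. For the monotonicity of $\bar T^j$, I would note that the committed nominal trajectory~\eqref{eq:nomStateInputSeq} is a feasible \emph{deterministic} trajectory of the nominal model steered into $\mathcal{\bar{SS}}_{t+N}^{j-1}$ with terminal cost $\bar Q_{t+N}^{j-1}$, so Proposition~\ref{prop:convNSteps} and Theorem~\ref{th:convergence} apply verbatim to the nominal state: at $\bar t = T^{j-1,*}-N$ the nominal safe set collapses to $\{x_F\}$ with zero terminal cost, forcing $\bar x_{T^{j-1,*}}^j = x_F$, hence $\bar T^j = T^{j-1,*} \leq \bar T^k$ for all $k<j$.

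Finally, the goal-set claim is a one-line consequence: at $t=\bar T^j$ the nominal state equals $\bar x_F = x_F$ while the tube constraint gives $e_{\bar T^j}^j \in \mathcal{E}$, so $x_{\bar T^j}^j = x_F + e_{\bar T^j}^j \in \{x_F\}\oplus\mathcal{E} = \mathcal{G}$. I expect the main obstacle to be the recursive-feasibility step, specifically verifying that the tube constraint is inherited at $t+1$ while simultaneously respecting the nonstandard committed-state constraint $\bar x_{t|t}^j = f_w(\bar x_{t-1}^j, \bar u_{t-1|t}^j)$ and the $N{+}1$-length input sequence; the bookkeeping that identifies the candidate committed nominal state $\bar x_{t+1|t+1}^j$ with $\bar x_{t+1|t}^{j,*}$ is exactly what makes Assumption~\ref{ass:invariance} applicable and closes the induction.
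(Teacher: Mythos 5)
Your proof is correct and follows essentially the same route as the paper's: the same shift-and-append candidate (with the previous iteration's stored nominal trajectory and a zero/placeholder first input serving as the base case at $t=0$, where the last two constraints are dropped), the same use of Assumption~\ref{ass:invariance} and policy~\eqref{eq:robustPolicy} to propagate the tube constraint $e_{t+1}^j \in \mathcal{E}$ with the committed nominal state identified as $\bar x_{t+1|t}^{j,*}$, and the same reduction of the monotonicity claim to the deterministic argument of Theorem~\ref{th:convergence} applied to the nominal trajectory. Your write-up is in fact somewhat more explicit than the paper's on the robust constraint satisfaction step and the final goal-set inclusion, which the paper dispatches by citation to standard robust MPC arguments.
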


\begin{proof}
We notice that at iteration $j$ the following nominal state and input sequences
\begin{equation*}
    \begin{aligned}
        [\bar x_0^{j-1},\ldots, \bar x_{N}^{j-1}] \text{ and }[0, \bar u_0^{j-1},\ldots, \bar u_{N-1}^{j-1}]\\
    \end{aligned}
\end{equation*}
are feasible for Problem~\eqref{eq:robustFTOCP} at time $t=0$, as $x_0^j \in \{x_s\}\oplus \mathcal{E}$ and the last two constraints in Problem~\eqref{eq:robustFTOCP} are not enforced at $t=0$. Now assume that at time $t\geq1$ Problem~\eqref{eq:robustFTOCP} is feasible. Let
\begin{equation*}
    \begin{aligned}
        [\bar x_{t|t}^{j,*},\ldots, \bar x_{t+N|t}^{j,*}] \text{ and }[\bar u_{t-1|t}^{j,*}, \bar u_{t|t}^{j,*},\ldots, \bar u_{t+N-1|t}^{j,*}]\\
    \end{aligned}
\end{equation*}
be the optimal state-input sequence, where $\bar x_{t+N|t}^{j,*} = \bar x_{k}^{i} \in \mathcal{\bar{SS}}^{j-1}_{t+N}$ for some $i\in \{0,\ldots, j\}$ and $k\in \{0,\ldots,T^i\}$. Then, from Assumption~\ref{ass:invariance} and~\eqref{eq:robustPolicy} the error $e_{t+1} = x_{t+1}-\bar x_{t+1|t}^{j,*} \in \mathcal{E}$ and therefore we have that
\begin{equation*}
    \begin{aligned}
        [\bar x_{t+1|t}^{j,*},\ldots, \bar x_{t+N|t}^{j,*}, \bar x_{k+1}^{i}] \text{ and }[\bar u_{t|t}^{j,*}, \bar u_{t+1|t}^{j,*},\ldots, \bar u_{t+N-1|t}^{j,*}, \bar u_{k}^{i}]\\
    \end{aligned}
\end{equation*}
is a feasible solution for Problem~\eqref{eq:robustFTOCP} at time $t+1$. Therefore, it follows from robust MPC arguments~\cite{chisci2001systems} that Problem~\eqref{eq:robustFTOCP} is feasible for all $t \geq 0$ and that state and input constraints~\eqref{eq:stateInputConstr} are robustly satisfied. The rest of the proof follows as in Theorem~\ref{th:convergence} by analysing the properties of the robust LMPC cost~$J_{t\rightarrow t+N}^{\scalebox{0.4}{RLMPC},j}(\cdot, \cdot)$.
\end{proof}

Finally, we underlined that the guarantees from the above theorem hold also for the relaxed LMPC from Section~\ref{sec:LMPC_Nonlinear} when in problem~\eqref{eq:FTOCP} the nominal safe set $\mathcal{\bar{SS}}_{t+N}^j$ is replaced with the nominal convex safe $\mathcal{\bar{CS}}_{N}^j$, which is computed as in Section~\ref{sec:SSandVfun} using the nominal trajectories from~\eqref{eq:nomStateInputSeq}.

\section{Results}
We test the proposed strategy on three time optimal control problems. In the first example, the LMPC is used to drive a dubins car from the starting point $x_S$ to the terminal point $x_F$ while avoiding the obstacle shown in Figure~\ref{fig:SSclosedLoopEvolution}. In the second example, we control a nonlinear double integrator system, which satisfies Assumption~\ref{ass:simplified}. Finally, the third example is a dubins car racing problem, which we solved using the relaxed LMPC after checking Assumption~\ref{property:storedDataAndDynamics} via sampling. The controller is implemented using CasADi~\cite{andersson2019casadi} for automatic differentiation and IPOPT~\cite{wachter2006implementation} to solve the nonlinear optimization problem. The code  is available at \url{https://github.com/urosolia/LMPC} in the {\fontfamily{qcr}\selectfont
NonlinearLMPC} folder.

\subsection{Minimum time obstacle avoidance}\label{sec:dubinsObs}
We use the LMPC policy from Section~\ref{sec:LMPC_MIP} on the minimum time obstacle avoidance optimal control problem from~\cite{rosolia2017learning},
\begin{equation*}\label{eq:dubinsObsAvoidance}
    \begin{aligned}
        \min_{\substack{T, a_0, \ldots, a_{T-1} \\  \theta_0, \ldots, \theta_{T-1}}} & \quad \sum_{t = 0}^{T-1} 1 \\
        \text{s.t. } \quad ~& \quad \begin{bmatrix}x_{t+1} \\ y_{t+1}\\ v_{t+1}\end{bmatrix}  = \begin{bmatrix}x_{t} + v_t \cos(\theta_t) \\ y_{t} + v_t\sin(\theta_t) \\ v_{t} + a_t \end{bmatrix}, \forall t \geq 0 \\
        & \quad \frac{(x_t - x_{\mathrm{obs}})^2}{a_x^2} + \frac{(y_t - y_{\mathrm{obs}})^2}{a_y^2} \geq 1, \forall t\geq 0 \\
        & \quad \begin{bmatrix} -\pi/2 \\ -1\end{bmatrix}\leq \begin{bmatrix} \theta_t \\ a_t\end{bmatrix} \leq \begin{bmatrix} \pi/2 \\ 1\end{bmatrix}, \forall t\geq 0 \\
        & \quad x_T = x_F =[54, 0,0]^T,\\
        & \quad x_0 = x_S = [0,0,0]^T.
    \end{aligned}
\end{equation*}
where $x_t$, $y_t$ and $v_t$ represent the position on the $X-Y$ plane and the velocity.
The goal of the controller is to steer the dubins car from the starting state $x_S$ to the terminal point $x_F$, while satisfying input saturation constraints and avoiding an obstacle. The obstacle is represented by an ellipse centered at $(x_{\mathrm{obs}}, y_{\mathrm{obs}})=(27,-1)$ with semi-axis $(a_x, a_y)=(8,6)$. At iteration $0$, we compute a first feasible trajectory using a brute force algorithm and we use the closed-loop data to initialize the LMPC~\eqref{eq:FTOCP} and \eqref{eq:policyLMPC} with $N=6$.

We compare the performance of the LMPC from~\cite{rosolia2017learning} and the LMPC policies~\eqref{eq:policyLMPC} synthesized using different number of data points $P=\{8,10,40\}$ and iterations $i=\{1,2,3\}$, as described in Section~\ref{sec:dataReduction} (in definition~\eqref{eq:SS^j,l} we set $l=j-1-i$). 
Figure~\ref{fig:SS_convergenceSpeed} shows the time steps $T^j$ at which the closed-loop system converged to $x_F$ at each iteration $j$. 
We notice that all LMPC policies converge to a steady state behavior which steers the system from $x_S$ to $x_F$ in $16$ time steps. 
Furthermore, Figure~\ref{fig:SS_convergenceSpeed} shows that the number of iterations needed to reach convergence is proportional to the amount of data used to synthesize the LMPC policy. 

\begin{figure}[h!]
	\centering \includegraphics[trim = 0mm 0mm 15mm 14mm, clip, width=0.5\columnwidth]{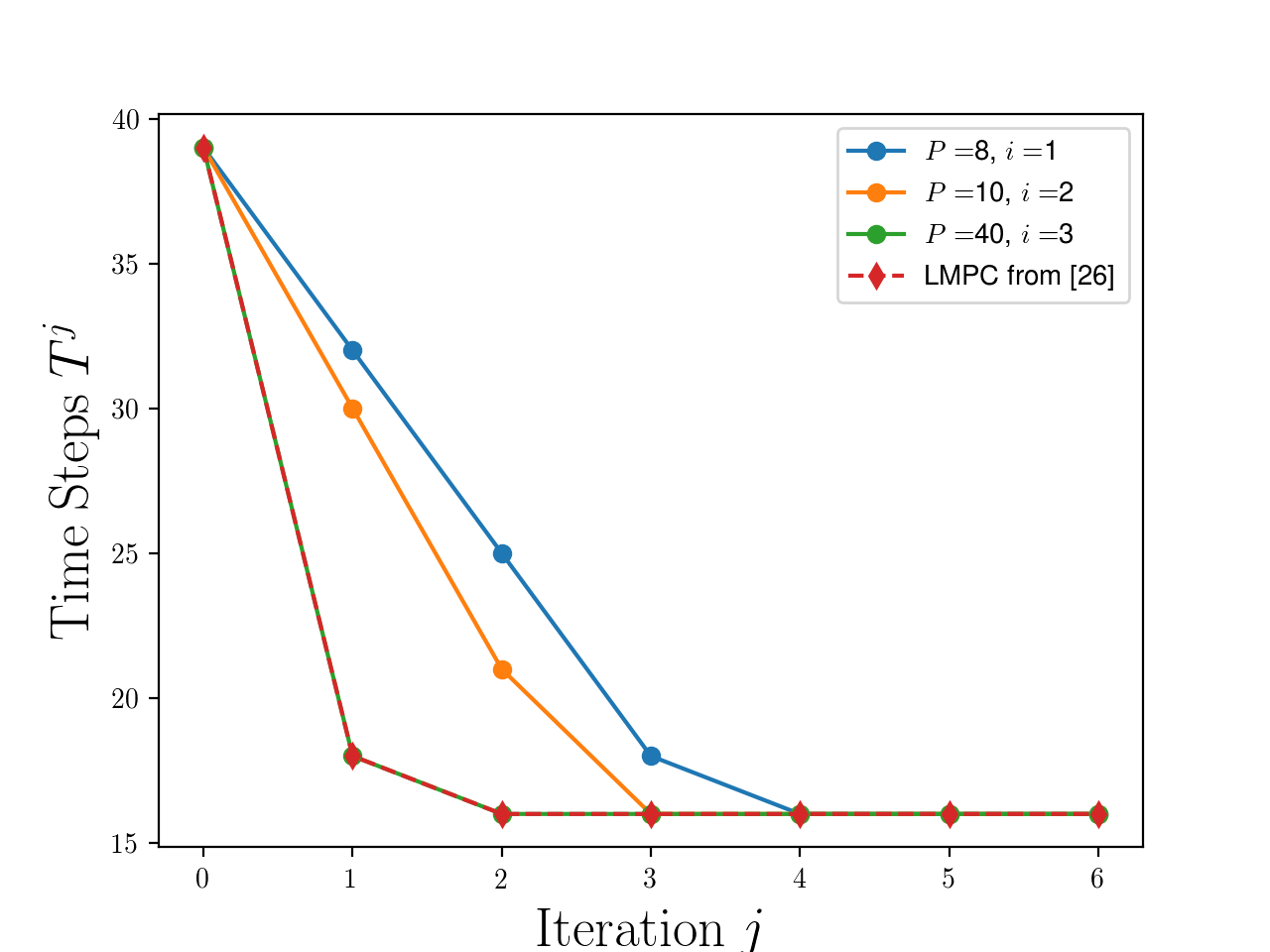}
	\caption{Time steps $T^j$ to reach $x_F$ as a function of the iteration index. We notice that as more data points are used in the synthesis process, the number of iterations needed to reach a steady state behavior decreases.}\label{fig:SS_convergenceSpeed}
\end{figure}

\begin{figure}[h!]
	\centering \includegraphics[trim = 0mm 0mm 15mm 14mm, clip, width=0.5\columnwidth]{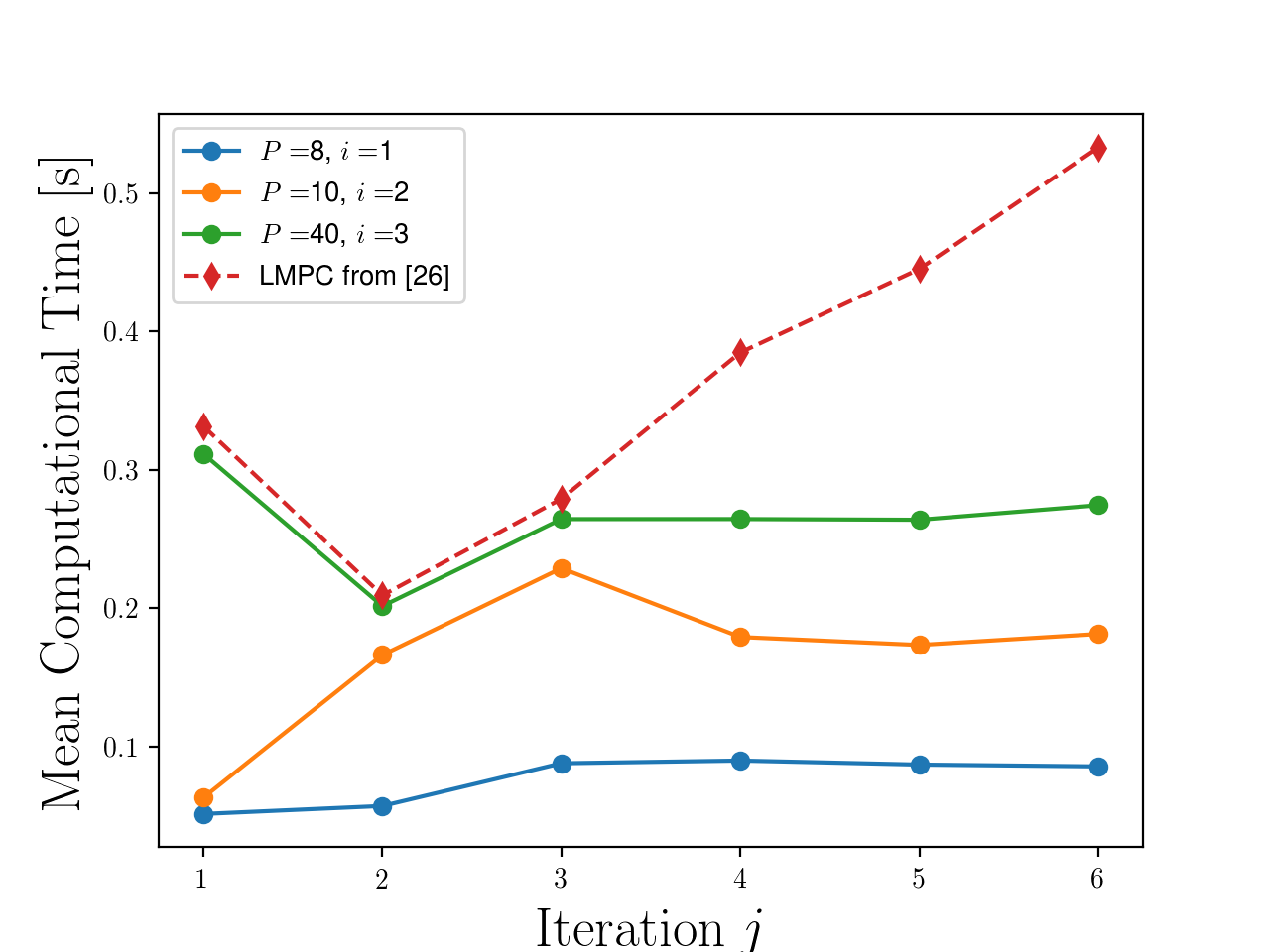}
	\caption{Computational cost associated with the LMPC policy at each time $t$ as function of the iteration index. We notice that as more data points are used in the synthesis process, the computational cost increases.}\label{fig:SScomputationalCost} 
\end{figure}

Figure~\ref{fig:SScomputationalCost} shows that the computational time increases as more data points $P$ are used in the control design. Therefore, there is a trade-off between the computational burden and the performance improvement shown in Figure~\ref{fig:SS_convergenceSpeed}.
Notice that, as the number of data points $P$ used for synthesis is constant, the computational cost associated with the proposed time varying LMPC strategy converges to a steady state value. On the other hand, the computation cost associate with the LMPC strategy from~\cite{rosolia2017learning} increases at each iteration.
Therefore, we confirm that the proposed time varying LMPC~\eqref{eq:FTOCP} and \eqref{eq:policyLMPC} enables the reduction of the computational cost while achieving the same closed-loop performance. We underline that we computed the solution to~\eqref{eq:FTOCP} by solving a set of nonlinear smooth optimization problems\footnote{Code available at \url{https://github.com/urosolia/LMPC} in the folder {\fontfamily{qcr}\selectfont
NonlinearLMPC/DubinsObstacleAvoidance\_SampleSafeSet}.}. At time $t$, for each of the $P(j-l)$ points stored in the safe subset~\eqref{eq:SS^j,l}, we solved a smooth nonlinear optimization problem. Afterwards, we selected the optimal solution associated with the minimum cost. Notice that the computational cost associated with the proposed strategy is proportional to the computational cost of a standard nonlinear MPC scaled by a factor $C = P(j-l)$, when parallel computing is not available.


\begin{figure}[h!]
	\centering \includegraphics[trim = 0mm 0mm 15mm 14mm, clip, width=0.5\columnwidth]{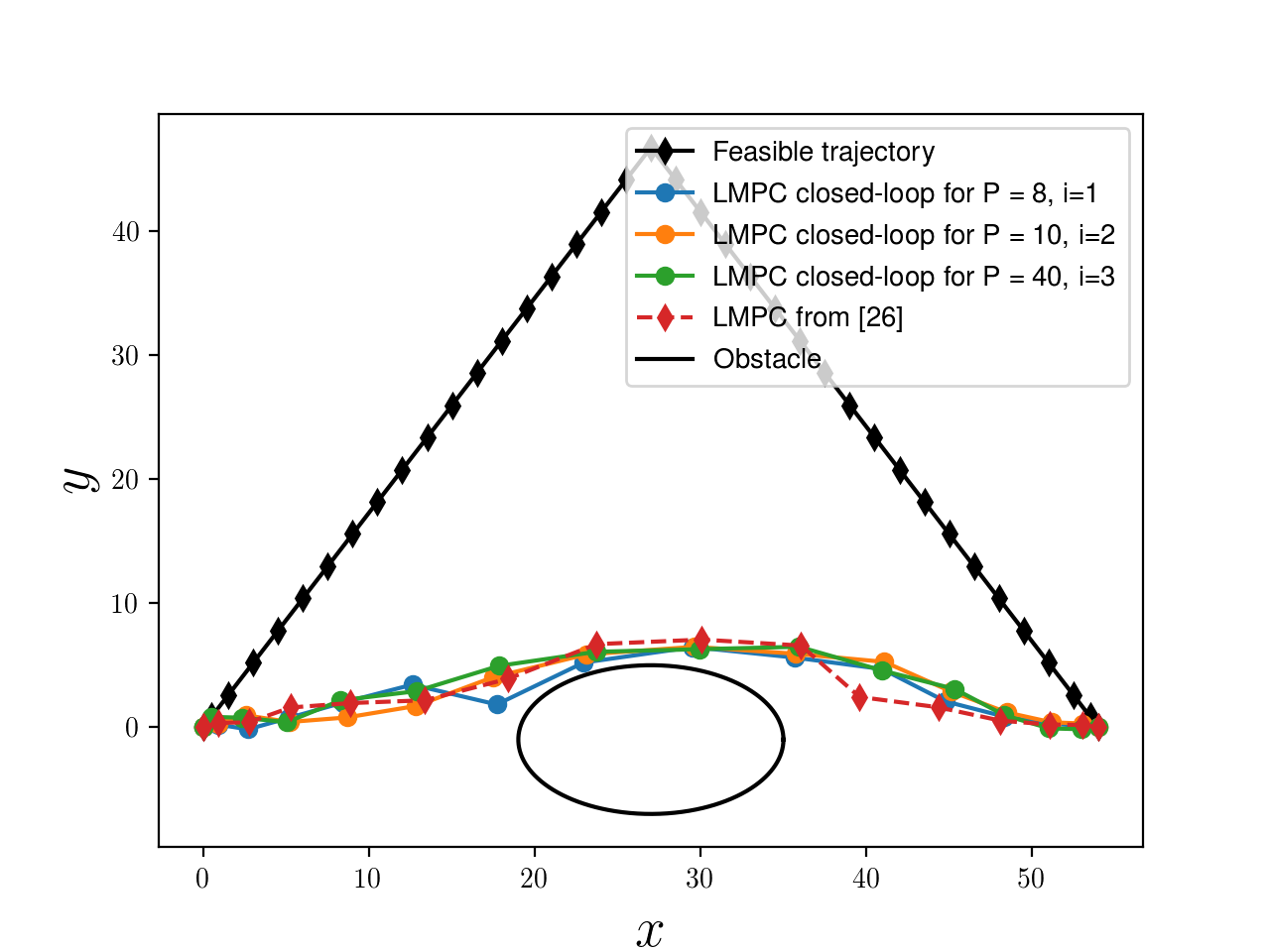}
	\caption{First feasible trajectory, stored data points and closed-loop trajectory at the $6$th iteration. We notice that the LMPC is able to avoid the obstacle at each iteration.}\label{fig:SSclosedLoopEvolution}
\end{figure}

\begin{figure}[h!]
	\centering \includegraphics[trim = 0mm 0mm 15mm 14mm, clip, width=0.5\columnwidth]{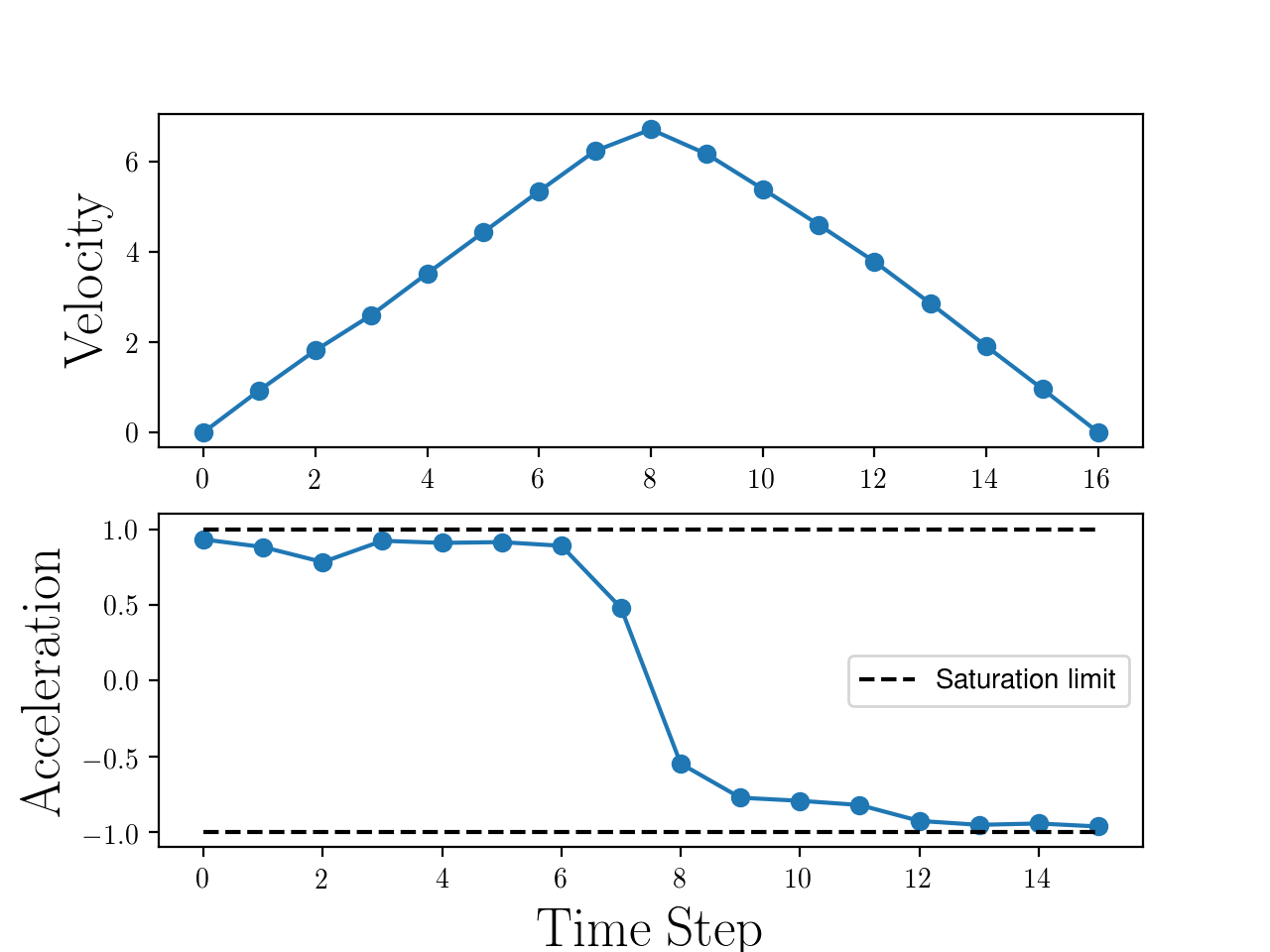}
	\caption{Acceleration and speed profile at convergence. We notice that the controller accelerates for the first $8$ time steps and afterwards it decelerates to reach the terminal goal state with zero velocity.}\label{fig:velocityAcceleration}
\end{figure}

Finally, we analyze the closed-loop trajectories associated with the LMPC policy~\eqref{eq:policyLMPC} synthesized with $P=8$ data points and $i=1$ iteration. Figure~\ref{fig:SSclosedLoopEvolution} shows the first feasible trajectory, the stored data points and the closed-loop trajectory at convergence. We confirm that the LMPC is able to explore the state space while avoiding the obstacle and steering the system from the starting state $x_S$ to the terminal state $x_F$. Furthermore, we notice that the LMPC accelerates during the first part of the task, and then it decelerates to reach the terminal state with zero velocity, as shown in Figure~\ref{fig:velocityAcceleration}.



\subsection{Nonlinear Double Integrator}
In this section, we test the relaxed LMPC~\eqref{eq:RelaxedFTOCP} and \eqref{eq:policyLMPCRelaxed} on the following nonlinear double integrator problem
\begin{equation}\label{eq:NNDoubleIntegreator}
    \begin{aligned}
        \min_{T, a_0, \ldots, a_{T-1}} & ~ \sum_{t = 0}^{T-1} 1 \\
    \text{s.t. } \quad ~&  ~ \begin{bmatrix}x_{t+1} \\ v_{t+1}\end{bmatrix}  = \begin{bmatrix}x_{t} + v_t dt \\ v_{t} + \big(1-\frac{v_t^2}{ v_{\mathrm{max}}^2}\big)a_t dt \end{bmatrix}, \forall t \geq 0 \\
        & ~ 0 \leq v_t \leq v_{\mathrm{max}}, \forall t \geq 0 \\
        & ~ -1 \leq a_t \leq 1, \forall t \geq 0 \\
        & ~ x_T = x_F =[0, 0]^T,\\
        & ~ x_0 = x_S = [-10,0]^T,
    \end{aligned}
\end{equation}
where the state of the system are the velocity $v_t$ and the position $x_t$. The control action is the acceleration $a_t$
which is scaled by the concave function $g(v_t) = \big(1-{v_t^2}/{ v_{\mathrm{max}}^2}\big)$. 
In Section~\ref{sec:app:NN} of the Appendix we show that the above nonlinear double integrator satisfies Assumption~\ref{ass:simplified}.
We used a brute force algorithm to perform the first feasible trajectory used to initialize the relaxed LMPC policies synthesized with $N=4$. Furthermore, we implemented the strategy from Section~\ref{sec:dataReduction} using $P=\{ 12, 25, 50, 200\}$ data points and $i = \{1,3,4,10 \}$ iterations. 

Figures~\ref{fig:NNcompTime} shows the number of iterations needed to reach convergence. 
We notice that as more data points $P$ are used in the policy synthesis process, the closed-loop system convergence faster in the iteration domain to a trajectory which performs the task in $14$ time steps.


\begin{figure}[h!]
	\centering \includegraphics[trim = 0mm 0mm 15mm 14mm, clip, width=0.5\columnwidth]{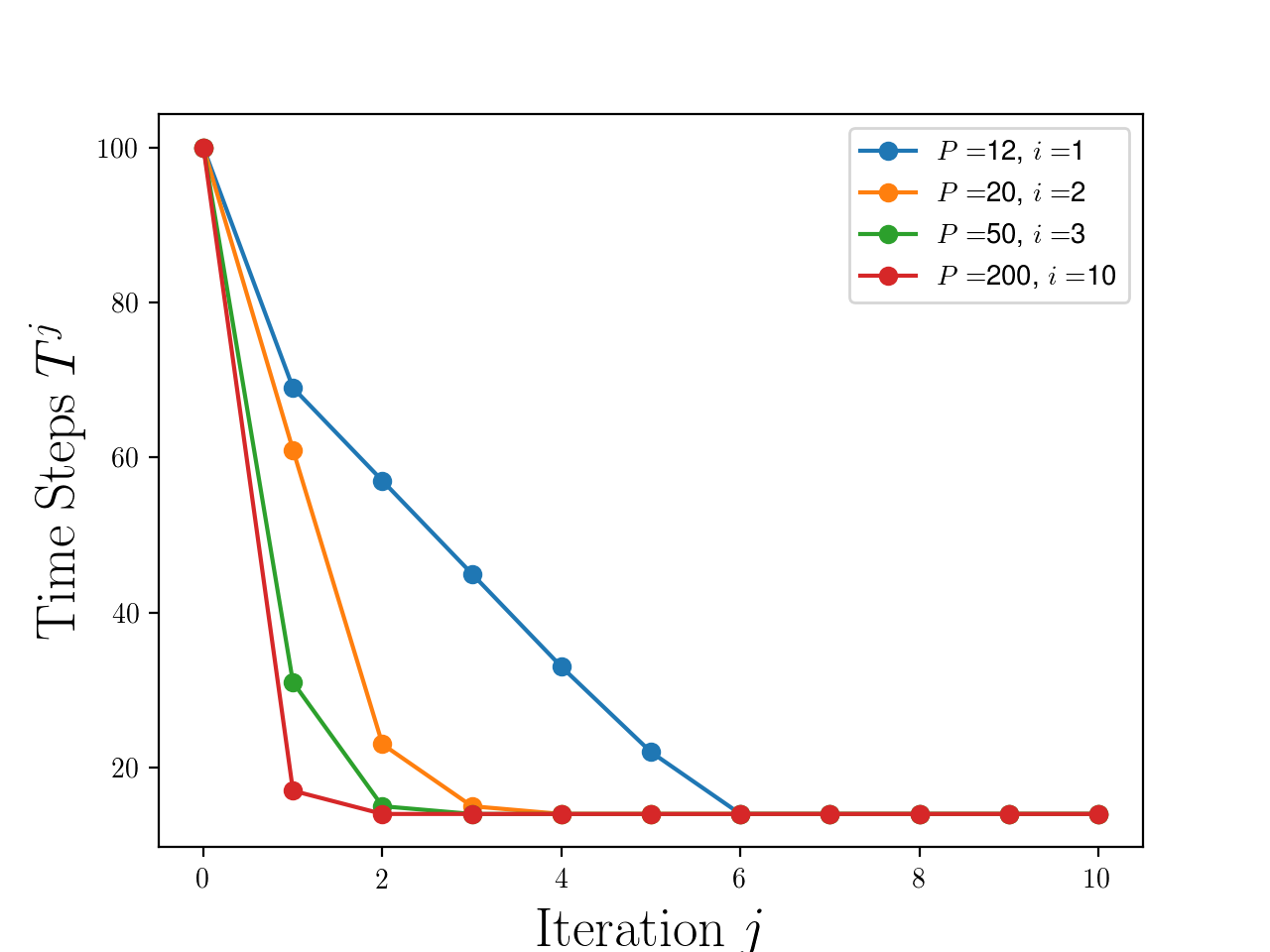}\caption{Time steps $T^j$ to reach $x_F$ as a function of the iteration index. We notice that, also in this example, as more data points are used in the synthesis process, the number of iterations needed to reach a steady state behavior decreases.}\label{fig:NNcompTime}
\end{figure}

Finally, Figures~\ref{fig:NN_cl} and \ref{fig:NN_input} show the steady-state closed-loop trajectories and the associated input sequences for all tested policies. We notice that after few iterations of the control task, all closed-loop systems converged to a similar behavior. In particular, the controller saturates the acceleration and deceleration constraints, as we would expect from the optimal solution to a time optimal control problem (Fig.~\ref{fig:NN_input}). 
It is interesting to notice that slowing down the nonlinear double integrator to zero speed requires more control effort than speeding up the system.
Therefore, the controller accelerates for the first $6$ time steps and then it decelerates for the last $8$ time steps to reach the terminal state with zero velocity.

\begin{figure}[h!]
	\centering \includegraphics[trim = 0mm 0mm 15mm 14mm, clip, width=0.5\columnwidth]{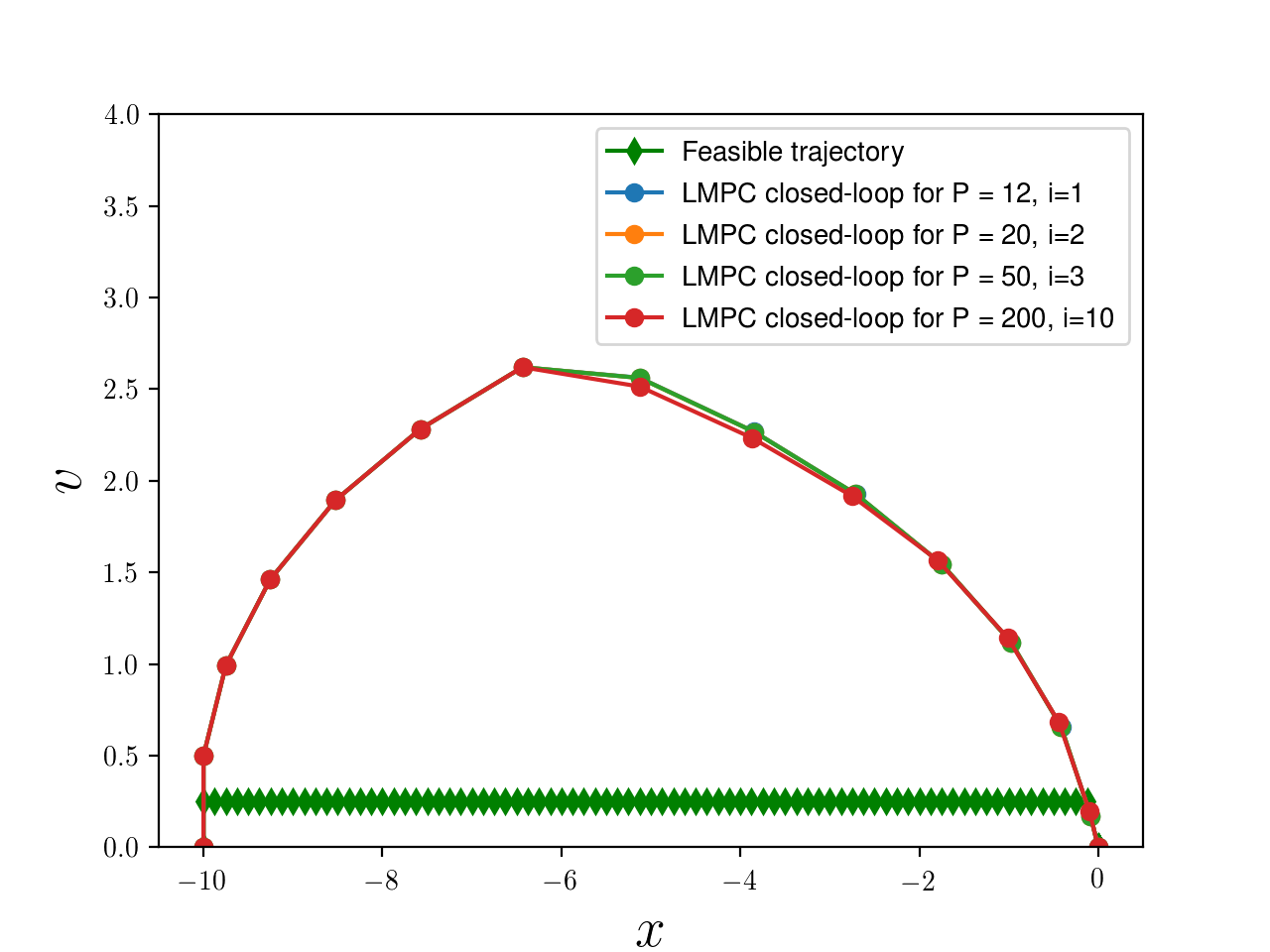}\caption{First feasible trajectory and closed-loop trajectories at the $10$th iteration. We notice that all LMPC policies converged to as similar behavior.}\label{fig:NN_cl}
\end{figure}
~\\
~\\
~\\
~\\
~\\
~\\
~\\
~\\

\begin{figure}[h!]
	\centering \includegraphics[trim = 0mm 0mm 15mm 14mm, clip, width=0.5\columnwidth]{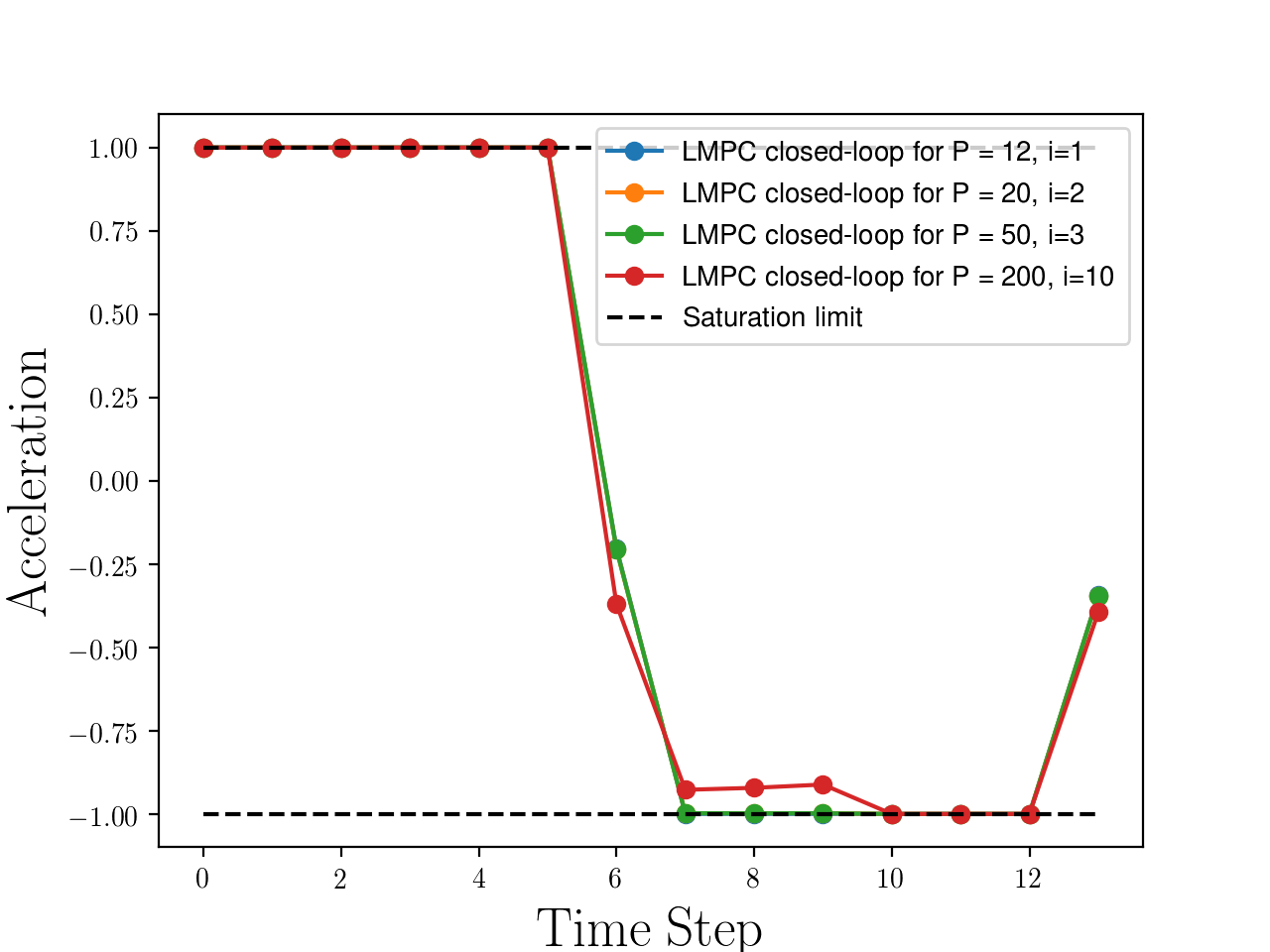}\caption{Acceleration inputs associated with the closed-loop trajectories at the $10$th iteration. We notice that the controller saturates the acceleration constraints.}\label{fig:NN_input}
\end{figure}

\newpage
\subsection{Minimum Time Dubins Car Racing}\label{sec:dubinsRacing}
We test the relaxed LMPC~\eqref{eq:RelaxedFTOCP} and \eqref{eq:policyLMPCRelaxed} on a minimum time racing problem.
The goal of the controller is to drive the dubins car on a curve of constant radius $R=10$ from the starting point $x_S$ to the finish line. More formally, our goal is to solve the following minimum time optimal control problem
\begin{equation}\label{eq:dubinsRacing}
    \begin{aligned}
        \min_{\substack{T, a_0, \ldots, a_{T-1} \\  \theta_0, \ldots, \theta_{T-1}}} & ~ \sum_{t = 0}^{T-1} 1 \\
    \text{s.t. } \quad ~ &  \begin{bmatrix}s_{t+1} \\ e_{t+1}\\ v_{t+1}\end{bmatrix}  = \begin{bmatrix}s_{t} + \frac{v_t \cos(\theta_t - \gamma(s_t))}{1 - e_t/R}dt \\ e_{t} + v_t\sin(\theta_t - \gamma(s_t))dt \\ v_{t} + a_t dt \end{bmatrix}, \forall t \geq 0 \\
        & ~ \begin{bmatrix} -2 \\ -1\end{bmatrix}\leq \begin{bmatrix} \theta_t \\ a_t\end{bmatrix} \leq \begin{bmatrix} 2 \\ 1\end{bmatrix}, \forall t\geq 0 \\
        & ~ e_{\mathrm{min}} \leq e_t \leq e_{\mathrm{max}}, \forall t \geq 0 \\
        & ~ x_T \in \mathcal{X}_F,\\
        & ~ x_0 = x_S = [0,0,0]^T,
    \end{aligned}
\end{equation}
where the states $s_t,e_t$ and $v_t$ are the distance travelled along the centerline, the lateral distance from the center of the lane and the velocity, respectively. Furthermore, $\gamma(s_t)$ is the angle of the tangent vector to the centerline of the road at the curvilinear abscissa $s_t$, the discretization time $dt = 0.5$s and the half lane width $e_{\mathrm{max}} = -e_{\mathrm{min}} = 2.0$. The control actions are the heading angle $\theta_t$ and the acceleration command $a_t$. Notice that the lane boundaries are represented by convex constraints on the state $e_t$, and therefore Assumption~\eqref{ass:convexity} is satisfied. The finish line is described by the following terminal set
\begin{equation}\label{eq:terminalSet}
    \mathcal{X}_F = \Bigg\{ x \in \mathbb{R}^3 \Bigg| \begin{bmatrix} 18.19 \\ -e_{\mathrm{min}} \\ 0 \end{bmatrix} \leq x \leq \begin{bmatrix} 18.69 \\ e_{\mathrm{min}} \\ 0 \end{bmatrix}\Bigg\}.
\end{equation}
As mentioned in Remark~\ref{remark:terminalSet}, in order to steer the system to a terminal set, we replaced $x_{T^i}^i = x_F$ with the vertices of $\mathcal{X}_F$ in definitions~\eqref{eq:CS} and \eqref{eq:cvxQfunction}.

In order to compute the first feasible trajectory needed to initialize the LMPC, we set $\theta_t^0= \gamma(s_t^0)$ and we designed a simple controller to steer the dubins car from $x_S$ to the terminal set $\mathcal{X}_F$. Notice that for $\theta_t^0= \gamma(s_t^0)$ the system is linear and consequently Assumption~\ref{property:storedDataAndDynamics} is satisfied for iteration $j=0$. For $j>0$, it is hard to verify analytically if Assumption~\ref{property:storedDataAndDynamics} holds, therefore we used a sampling strategy to approximately check this condition, as shown in the Appendix~\ref{sec:app:dubins}.

\begin{figure}[h!]
	\centering \includegraphics[trim = 0mm 0mm 15mm 14mm, clip, width=0.5\columnwidth]{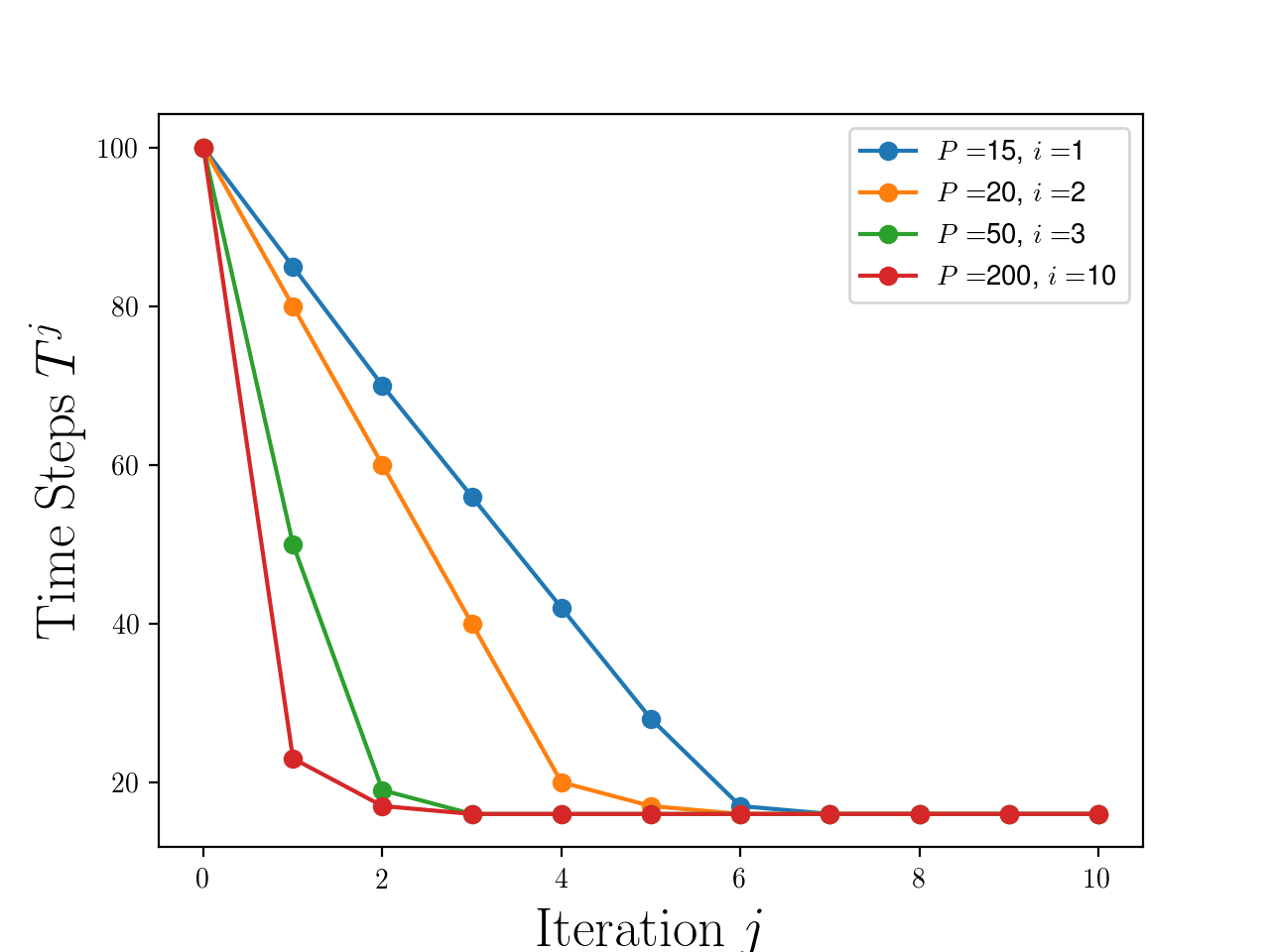}
	\caption{Time steps $T^j$ to reach $x_F$ as a function of the iteration index. We notice that as more points $P$ and iterations $i$ are used to synthesize the relaxed LMPC policy, the closed-loop system converges faster to a steady state behavior.}\label{fig:CSconvergenceSpeed}
\end{figure}

We test the LMPC policies synthesized with $N=4$ and using the strategy described in Section~\ref{sec:dataReduction} for $P=\{ 15, 25, 50, 200\}$ data points and $i = \{1,3,4,10 \}$ iterations. Figure~\ref{fig:CSconvergenceSpeed} shows the time steps $T^j$ needed to reach the terminal set~\eqref{eq:terminalSet}. We notice that after few iterations all LMPC policies converged to a steady state behavior which steers the system to the goal set in $16$ time steps. Also in this example, convergence is reached faster as more data points are used in the LMPC synthesis process. 



Furthermore, Figures~\ref{fig:CSclosedLoop} and \ref{fig:CSinputComparison} show that closed-loop trajectories and associated input sequences at convergence. In order to minimize the travel time, the LMPC cuts the curve and steers the system to a state within the terminal set which is close to the road boundary. Furthermore, we notice that the controller saturates the acceleration and deceleration constraints, as we expect from an optimal solution to a minimum time optimal control problem.

\begin{figure}[h!]
	\centering \includegraphics[trim = 0mm 0mm 15mm 14mm, clip, width=0.5\columnwidth]{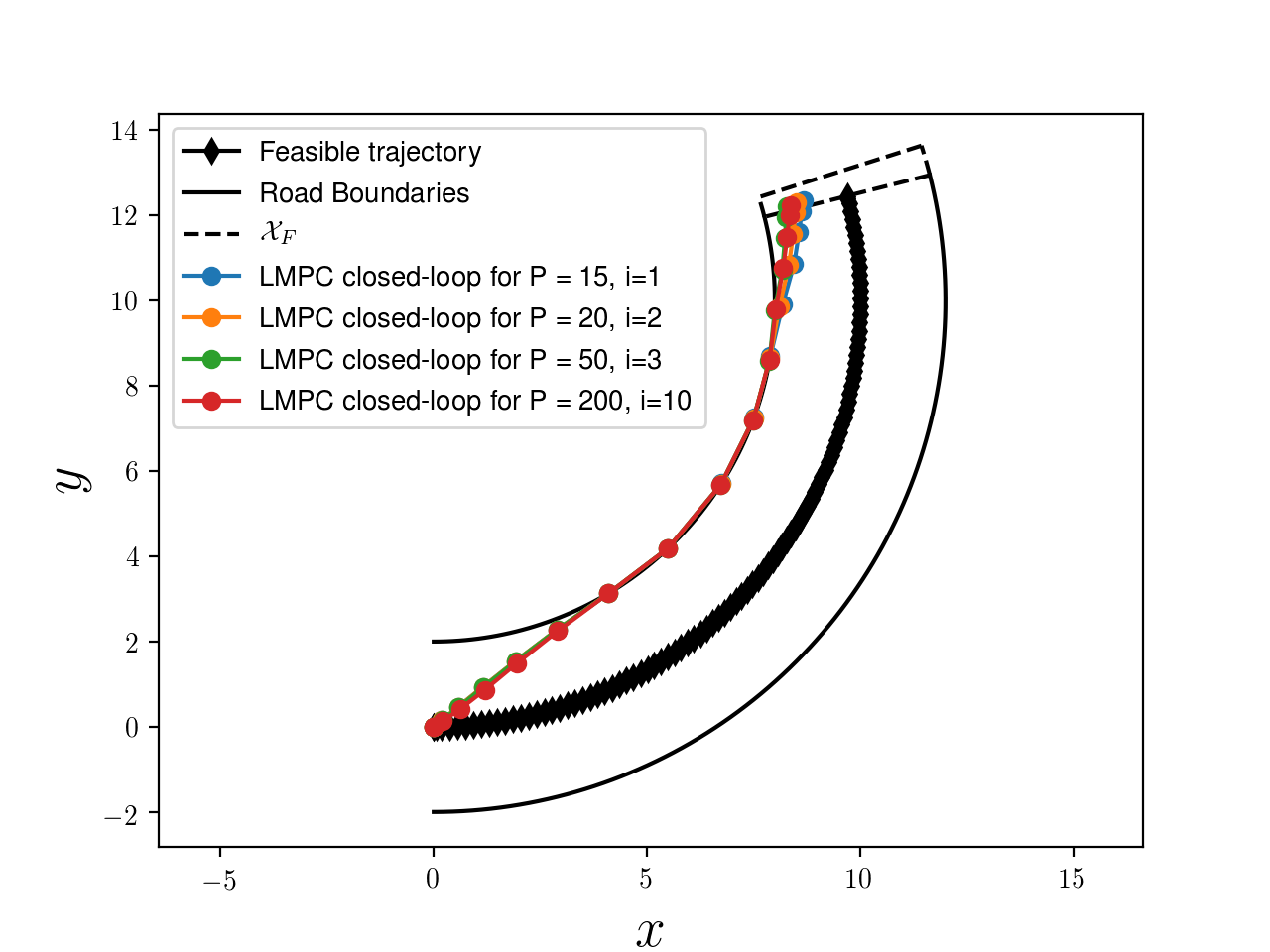}
	\caption{Comparison between the first feasible trajectory used to initialize the LMPC and the steady state LMPC closed-loop trajectories at convergence.}\label{fig:CSclosedLoop}
\end{figure}

\begin{figure}[h!]
	\centering \includegraphics[trim = 0mm 0mm 15mm 14mm, clip, width=0.5\columnwidth]{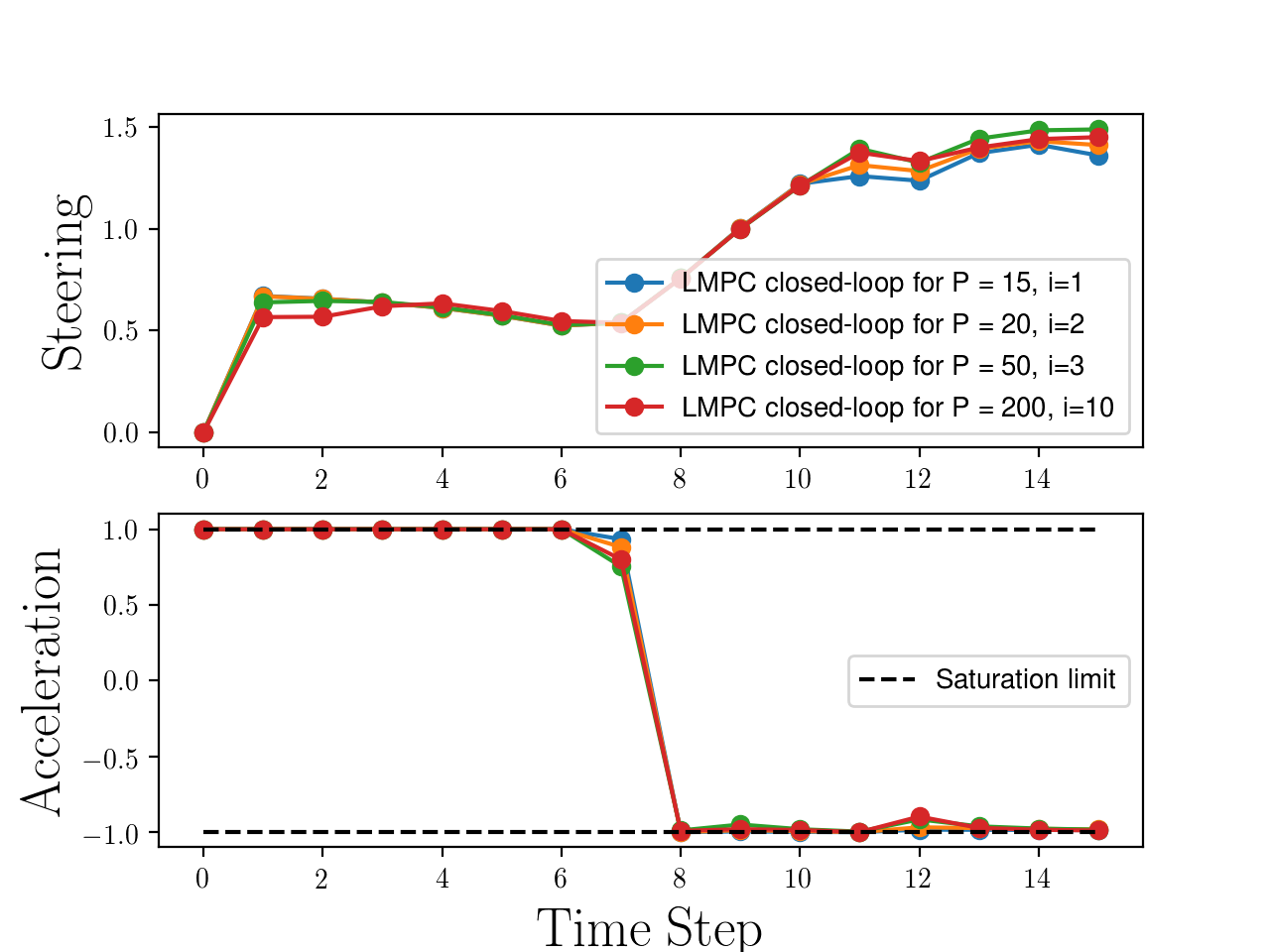}
	\caption{Comparison of the steady state inputs associated with the relaxed LMPC policies. We notice that the acceleration and deceleration is saturated, as we expect from the optimal solution to a minimum time optimal control problem.}\label{fig:CSinputComparison}
\end{figure}

Finally, we tested the LMPC policy starting from different initial conditions. In particular, we used the trajectories from Figure~\ref{fig:CSclosedLoop} to initialize the controller and we run $10$ iterations from the initial conditions reported in Table~\ref{tab:IC}. These initial conditions are contained in the $N$-steps controllable set from the set $\mathcal{CS}^0_{N}$. Therefore, as discussed in Theorem~\ref{th:diff_IC}, the LMPC policy is able to steer the system to the goal set while satisfying state and input constraints, as shown Figure~\ref{fig:IC}. Finally, we underline that the controller steered the system from all initial conditions to the terminal set in $16$ time steps.

\begin{table}[h!]
\centering
\caption{Initial conditions.}\label{tab:IC}
\begin{tabular}{c|cccccccccc}
Iteration & $j=1$& $j=2$& $j=3$& $j=4$& $j=5$& $j=6$& $j=7$& $j=8$& $j=9$& $j=10$ \\\midrule
Initial Condition & $\begin{bmatrix} 0.5 \\ 0.5 \\ 0.0 \end{bmatrix}$ &
                    $\begin{bmatrix} 0.15 \\ -1.0 \\ 0.0 \end{bmatrix}$ & 
                    $\begin{bmatrix} 0.1 \\ 0.3 \\ 0.0 \end{bmatrix}$ & 
                    $\begin{bmatrix} 0.0 \\ 0.0 \\ 0.0\end{bmatrix}$ & 
                    $\begin{bmatrix} 0.25 \\ 0.25 \\ 0.0 \end{bmatrix}$ & 
                    $\begin{bmatrix} 0.25 \\-0.25 \\ 0.0 \end{bmatrix}$ & 
                    $\begin{bmatrix} 0.5 \\ 0.0 \\ 0.0 \end{bmatrix}$ & 
                    $\begin{bmatrix} 0.0 \\ 0.25 \\ 0.0 \end{bmatrix}$ & 
                    $\begin{bmatrix} 0.25 \\ 0.0 \\ 0.0 \end{bmatrix}$ & 
                    $\begin{bmatrix} 0.15 \\ 0.2 \\ 0.0 \end{bmatrix}$         \\\midrule
\end{tabular}
\end{table}

\begin{figure}[h!]
	\centering \includegraphics[trim = 0mm 0mm 15mm 14mm, clip, width=0.5\columnwidth]{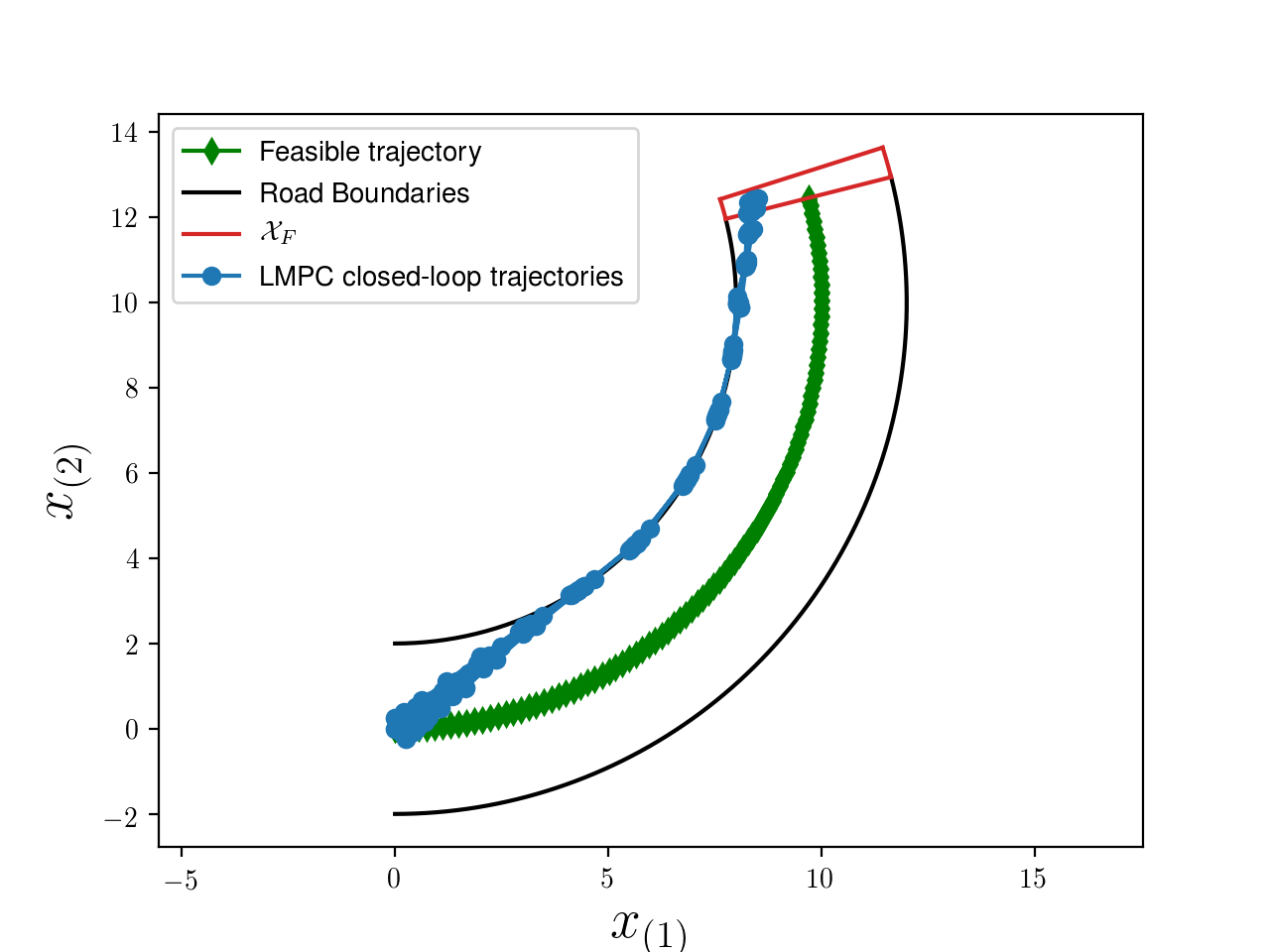}
	\caption{Closed-loop trajectories associated with the initial conditions from Table~\ref{tab:IC}.}\label{fig:IC}
\end{figure}


\section{Conclusions}\label{sec:conclusions}
We presented a time varying Learning Model Predictive Controller (LMPC) for time optimal control problems. The proposed control framework uses closed-loop data to construct time varying safe sets and approximations to the value function. Furthermore, we showed that these quantities can be convexified to synthesize a relaxed LMPC policy. We showed that the proposed control strategies guarantee safety, finite time convergence and non-decreasing performance with respect to previous task executions. Finally, we tested the controllers on three nonlinear minimum time optimal control problems.

\section{Acknowledgment}
The authors would like to thank Nicola Scianca from Sapienza University of Roma for the interesting discussions on repetitive LMPC and reviewers for helpful suggestions.
Some of the research described in this review was funded by the Hyundai Center of Excellence
at the University of California, Berkeley. This work was also sponsored by the Office of Naval
Research gran N00014-18-1-2833. The views and conclusions contained herein are those of the authors and should not be
interpreted as necessarily representing the official policies or endorsements, either expressed or
implied, of the Office of Naval Research or the US government.

\section{Appendix}

\subsection{Nonlinear Double Integrator}\label{sec:app:NN}
In this section, we show that the following nonlinear double integrator
\begin{equation*}
    z_{k+1} = \begin{bmatrix}x_{k+1} \\ v_{k+1} \end{bmatrix} = \begin{bmatrix}x_{k} +  v_k dt\\ v_{k} + g(v_k) a_k dt\end{bmatrix} = f_n(z_k, a_k) 
\end{equation*}
for $g(v_k) = (1 - v_k^2/v_{\mathrm{max}}^2)$ satisfies Assumption~\ref{ass:simplified}. Consider a set of states $x^{(i)} \in \mathcal{X}$, inputs $u^{(i)} \in \mathcal{U}$ and multipliers $\lambda^{(i)} \geq 0$, for $i\in \{1,\ldots, P\}$. Let
\begin{equation*}
    x = \sum_{k=1}^P \lambda^{(k)} x^{(k)} \text{ and } \sum_{k=1}^P \lambda^{(k)} =1,
\end{equation*}
we have that
\begin{equation*}
    \sum_{k=1}^P \lambda^{(k)} f_n(z^{(k)}, a^{(k)}) = \sum_{k=1}^P  f_n(\lambda^{(k)} z^{(k)}, a)
\end{equation*}
where 
\begin{equation*}
    a = \frac{\sum_{k=1}^P \lambda^{(k)} g(v^{(k)}) a^{(k)}}{ g\big( \sum_{k=1}^P \lambda^{(k)} v^{(k)} \big) }.
\end{equation*}
Finally, by concavity of $g(v_k)\geq 0 $ for all $z_k = [x_k, v_k]^T \in \mathcal{X}$ we have that
\begin{equation*}
    a = \frac{\sum_{k=1}^P \lambda^{(k)} g(v^{(k)}) a^{(k)}}{ g\big( \sum_{k=1}^P \lambda^{(k)} v^{(k)} \big) } \geq \frac{ \sum_{k=1}^P \lambda^{(k)} g(v^{(k)})  }{g\big( \sum_{k=0}^P \lambda^{(k)} v^{(k)}\big) } a_{\textrm{min}} \geq a_{\textrm{min}} \text{ and } a = \frac{\sum_{k=1}^P \lambda_k g(v^{(k)}) a^{(k)}}{ g\big( \sum_{k=1}^P \lambda^{(k)} v^{(k)} \big) } \leq \frac{ \sum_{k=1}^P \lambda^{(k)} g(v^{(k)})  }{g\big( \sum_{k=0}^P \lambda^{(k)} v^{(k)}\big) } a_{\textrm{max}} \leq a_{\textrm{max}}
\end{equation*}
where $a_{\textrm{min}}=-1$ and $a_{\textrm{max}}=1$. Therefore, we conclude that $a \in \mathcal{U}$ and Assumption~\ref{ass:simplified} is satisfied.

\begin{figure}[t]
\centering
\begin{minipage}{.5\textwidth}
  \centering \includegraphics[trim = 36mm 12mm 14mm 16mm, clip, width=0.95\columnwidth]{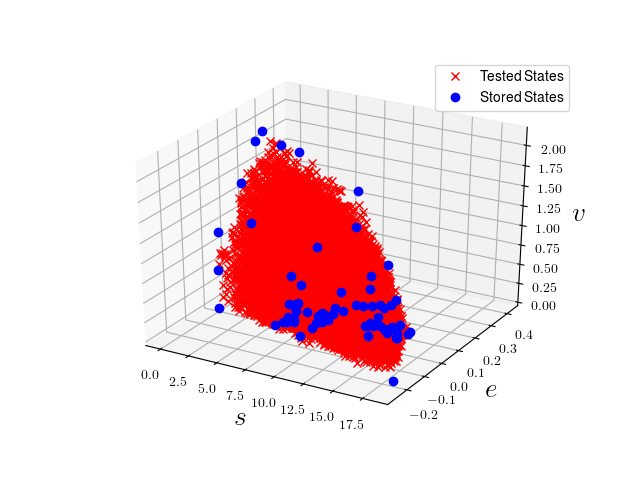}
	\caption{Randomly sampled states used to check \\ that Assumption~\ref{property:storedDataAndDynamics} is approximately satisfied.}\label{fig:testIt3}
\end{minipage}%
\begin{minipage}{.5\textwidth}
	\centering \includegraphics[trim = 36mm 12mm 14mm 16mm, clip, width=0.95\columnwidth]{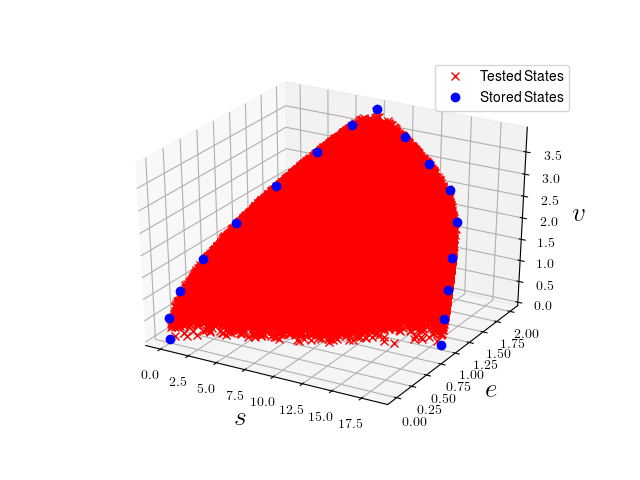}
	\caption{Randomly sampled states used to check that Assumption~\ref{property:storedDataAndDynamics} is approximately satisfied.}\label{fig:testIt9}
\end{minipage}
\end{figure}

\subsection{Dubins Car}\label{sec:app:dubins}
We used a sampling strategy to check if Assumption~\ref{property:storedDataAndDynamics} is approximately satisfied for the example in Section~\ref{sec:dubinsRacing}. In particular, for $s \in \{1,\ldots,10^5\}$ we randomly sampled a set of states $[ x^{(1,s)},\ldots,   x^{(n+1,s)}]$ from the set of stored states $\big\{ \bigcup_{i=0}^j \bigcup_{t=0}^{T^i} x_t^i \big\}$ and a set of multipliers $[ \lambda^{(1,s)},\ldots,  \lambda^{(n+1,s)}] \in \Lambda$ where $\Lambda = \{ [ \lambda^{(1)},\ldots, \lambda^{(n+1)}]: \lambda^{(k)}\geq 0, \sum_{k=1}^{n+1}=1\}$. Afterwards, we checked that for all $s \in \{1,\ldots,10^5\}$ Assumption~\ref{property:storedDataAndDynamics} is satisfied at the sampled points, i.e., $\forall s \in \{1,\ldots,10^5\}$
\begin{equation}\label{eq:sampledDataCheck}
    \exists u \in \mathcal{U} \text{ such that } f\big(x^{(s)},u\big) \in \text{Conv}\Big( \textstyle \bigcup_{s=1}^{n+1} f( x^{(k,s)}, u^{(k,s)})\Big),
\end{equation}
where $x^{(s)} = \sum_{k=1}^{n+1} \lambda^{(k,s)} x^{(k,s)}$ and $ u^{(k,s)}$ is the stored input associated with the stored state $x^{(k,s)}$. As~\eqref{eq:sampledDataCheck} is satisfied for all $10^5$ randomly sampled data points and $10^5 \geq \log(1/\beta) / \log(1/\epsilon)$ for $\beta = 10^{-6}$ and $\epsilon = 0.99986$. From \cite[Proposition~1]{zhang2019safe} and \cite[Theorem~3.1]{tempo1996probabilistic}, we have that with confidence $\beta = 10^{-6}$ the probability of randomly sampling a set of states $[ x^{(1)},\ldots,  x^{(n+1)}]$ and a set of multipliers $[ \lambda^{(1,i)},\ldots,  \lambda^{(n+1,i)}] \in \Lambda$ for which Assumption~\ref{property:storedDataAndDynamics} is satisfied is $\epsilon = 0.99986$, i.e.,
\begin{equation*}
\begin{aligned}
    \mathbb{P}\Bigg[&\exists u \in \mathcal{U} \text{ such that }f\big(\textstyle \sum_{k=1}^{n+1} \lambda^{(k)} x^{(k)},u \big) \in \text{Conv}\Big( \textstyle \bigcup_{k=1}^{n+1} f( x^{(k)}, u^{(k)})\Big)\Bigg]\geq \epsilon=0.99986
\end{aligned}
\end{equation*}
where the random variable $ x^{(k)}$ has support $\big\{ \bigcup_{i=0}^j \bigcup_{t=0}^{T^i} x_t^i \big\}$ and the vector of random variables $[ \lambda^{(1)},\ldots, \lambda^{(n+1)}]$ has support $\Lambda = \{ [ \lambda^{(1)},\ldots, \lambda^{(n+1)}]: \lambda^{(k)}\geq 0, \sum_{k=1}^{n+1}=1\}$. Both $ x^{(k)}$ and $[ \lambda^{(1)},\ldots, \lambda^{(n+1)}]$ have the same distributions that were used to generate the data points in~\eqref{eq:sampledDataCheck}.  Finally, for $j=\{3, 10\}$ Figures~\ref{fig:testIt3} and~\ref{fig:testIt9} show the $10^{5}$ randomly generated states $x^{(s)} = \sum_{k=1}^{n+1}\lambda^{(k,s)} x^{(k,s)}$ where we have verified that Assumption~\ref{property:storedDataAndDynamics} holds.







\bibliography{mybibfile}%




\end{document}